\RequirePackage{xcolor} 
\documentclass[journal]{IEEEtran}
\usepackage{etoolbox} 
\makeatletter
\@ifundefined{color@begingroup}%
  {\let\color@begingroup\relax
   \let\color@endgroup\relax}{}%
\def\fix@ieeecolor@hbox#1{%
  \hbox{\color@begingroup#1\color@endgroup}}
\patchcmd\@makecaption{\hbox}{\fix@ieeecolor@hbox}{}{\FAILED}
\patchcmd\@makecaption{\hbox}{\fix@ieeecolor@hbox}{}{\FAILED}

\usepackage{epsfig}
\usepackage{url}
\usepackage{amsmath}
\usepackage{amsfonts}

\usepackage{amsthm}
\usepackage{amssymb}
\usepackage{graphicx}
\usepackage{balance} 
\usepackage{algorithm2e}
\usepackage{algorithmic}
\usepackage{bm}

\newcommand{\bd}{\mathbf}
\newcommand{\C}{\mathcal{C}}
\DeclareMathOperator{\Tr}{Tr}

\newcommand{\bma}{\begin{bmatrix}}
\newcommand{\ebma}{\end{bmatrix}}
\newcommand{\Idq}{\bd{I}_{\mathrm{dq}}}
\newcommand{\Vdq}{\bd{V}_{\mathrm{dq}}}
\newcommand{\Edq}{\bd{E}_{\mathrm{dq}}}

\newcommand{\Vm}{{V}_\mathrm{dq}^2}

\newcommand{\oSone}{\overline{S}_1}
\newcommand{\oStwo}{\overline{S}_2}

\usepackage{enumitem}
\usepackage{graphicx} 
\graphicspath{{figure/}}
\usepackage{float}
\usepackage[caption=false,font=footnotesize]{subfig}
\usepackage{soul}
\usepackage{color}

\usepackage{algorithm,algorithmic}
\usepackage{cite}

\newtheoremstyle{bfnote}%
{}{}%
{\itshape}{}%
{\bfseries}{.}%
{ }%
{\thmname{#1}\thmnumber{ #2}\thmnote{ (#3)}}
\theoremstyle{bfnote}
\newtheorem{thm}{Theorem}

\newtheorem{lemma}{Lemma}

\usepackage{circuitikz}
\usepackage{tikz}
\usetikzlibrary{positioning, shapes.geometric, arrows.meta}

\title{\LARGE \bf Convexity and Optimal Online Control of Grid-Interfacing Converters with Current Limits}

\author{Lauren Streitmatter, Trager Joswig-Jones and Baosen Zhang
\thanks{Department of Electrical and Computer Engineering, University of Washington Seattle, WA 98195, USA \{lstreit, joswitra, zhangbao\}@uw.edu} %
\thanks{The authors are partially supported by the NSF grant ECCS-1942326, the Washington Clean Energy Institute, the Grainger Foundation and the Galloway Foundation. L. Streitmatter is also supported by the National Science Foundation Graduate Research Fellowship under Grant No. DGE-2140004. Any opinions, findings, and conclusions or recommendations expressed in this material are those of the authors and do not necessarily reflect the views of the National Science Foundation.}}

\begin{document}
\maketitle
\thispagestyle{empty}
\pagestyle{empty}

\begin{abstract}
Converter-based generators and loads are growing in prevalence on power grids across the globe. The rise of these resources necessitates controllers that handle the power electronic devices' strict current limits without jeopardizing stability or overly constraining behavior. Existing controllers often employ complex, cascaded control loop architecture to saturate currents, but these controllers are challenging to tune properly and can destabilize following large disturbances.

In this paper, we extend previous analysis to prove the feasible output region of a grid-connected converter is convex regardless of filter topology. We then formulate a convex optimal control problem from which we derive a projected gradient descent-based controller with convergence guarantees. This approach drives the converter toward optimality in real-time and differs from conventional  control strategies that regulate converter outputs around predefined references regardless of surrounding grid conditions. Simulation results demonstrate safe and stabilizing behavior of the proposed controller, in both the single-converter-infinite-bus systems and multi-converter networks.

\end{abstract}


\section{Introduction}
Power grids across the world are undergoing rapid transformation, driven by the increase of converter-based generation and loads. Wind turbines, solar PV, and battery storage plants that connect to the grid via power electronic devices are displacing synchronous generators to achieve decarbonization targets and reduce fuel costs. Large converter-based loads, including electric vehicle charging infrastructure and data centers, affect grid dynamics and reliability~\cite{quint2025practical}. All of these resources behave differently from conventional generation and passive loads on the grid due to their highly programmable controls and often decentralized configurations. As a result, designing converter controllers that maintain the stability and reliability of modern power networks has become an important research area~\cite{newdefinitionstability}.

One prominent control challenge of converter-based resources is their strict current limits that must be respected during both normal and contingent operations~\cite{fan2022review,baeckeland2024overcurrent}. Unlike synchronous generators which can temporarily handle more than 5 to 10 times their rated current during overloading or grid disturbances, converters can only handle currents slightly higher than their rated values before risking damage to internal semiconductor switching components~\cite{elnaggar2013comparison,ieee2022ieee,kassakian2023principles}. There is a growing body of research investigating current-limiting converter controls (see~\cite{fan2022review,baeckeland2024overcurrent,paquette2014virtual,constraintaware2026} for more).

Current-limiting capability is necessary in both grid-following (GFL) and grid-forming (GFM) controllers. GFL inverters synchronize to an existing grid voltage and regulate their output current accordingly, whereas GFM inverters establish their own voltage waveform similar to synchronous generators. Although the former is commonly used for grid-connected wind and solar generation today and more readily accommodates current-limiting strategies, it is primarily suitable for systems with low converter penetration, where a sufficiently strong grid voltage reference is available for stable synchronization~\cite{li2022revisiting}.

GFM converters address many limitations of GFL control in weak grids and systems with high converter penetration, but current limiting remains challenging. GFM control is implemented locally with limited knowledge of the surrounding network. While this is advantageous under normal operating conditions, incorporating current saturation blocks into GFM control loops causes converters to behave as current sources when saturated. This causes problems such as loss of synchronization with the grid, particularly after returning to normal unsaturated operating conditions~\cite{fan2022review, unified2026, rokrok2022}. Anti-windup strategies can help mitigate these challenges, but are sensitive to parameter tuning, reduce analytical tractability, and destabilize under large disturbances \cite{baeckeland2025}.

In this paper, we avoid the need for complicated cascaded control loops and ad-hoc current limiters. We propose an online feedback outer-loop control strategy that tracks the surrounding grid conditions to output safe and optimal converter current references even under saturation. It is compatible with GFM and GFL control paradigms and leverages the convexity of converter operating regions to guarantee convergence~\cite{streitmatter2025geometry}. 

This paper builds on the results presented in~\cite{streitmatter2025geometry} and is organized as follows. Section~\ref{sec:model} presents the network model and control problem formulation for a grid-connected converter, similar to~\cite{streitmatter2025geometry} but for the general case of any filter impedance. Section~\ref{sec:convex} proves the convexity of this converter's operating region, extending the results in~\cite{streitmatter2025geometry} to include converters with capacitor filtering. Section~\ref{sec:module} leverages convex properties to present a real-time dynamic optimal control strategy with guaranteed convergence. This extends~\cite{streitmatter2025geometry} which presents how to find optimal and safe operating points, but not how to achieve them safely in real-time. Section~\ref{sec:simulation} demonstrates the performance of the proposed controller in simulations, and Section~\ref{sec:conclusion} concludes with plans for future work.

\section{Model and Problem Formulation} \label{sec:model}
\subsection{Converter Model}
We consider a grid-connected, three-phase device that can convert direct current to alternating current (inverter) or vice versa (rectifier). While the analysis applies to converters operating in either direction, we choose the inverter as an example for this study. The grid is modeled as an infinite bus behind series resistance $R_g$ and inductance $L_g$, and the converter connects to the grid through an $RLC$ filter shown in Figure~\ref{fig:circuit}. We will prove the converter's operating region is convex even when considering the filter's shunt capacitor, generalizing the result from~\cite{streitmatter2025geometry} that only considers an $RL$ filter to any filter and line with combined equivalent impedance $Z_{eq}$ as shown in Figure~\ref{fig:equiv_circuit}.

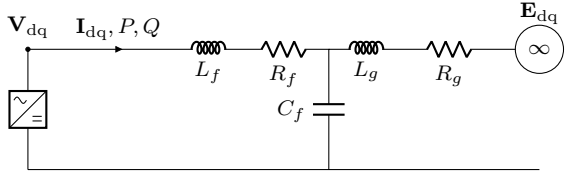
\begin{figure}[ht]
    \begin{center}
\begin{circuitikz} [american voltages,scale=0.8, font=\footnotesize]
\ctikzset{bipoles/length=7mm} 
\draw
    (0,0) to[sdcac] (0,2)
    (0,2) node[circ,label=above:$\Vdq$]{}
    -- (0.5,2)
    to[short, i=${\Idq,P,Q}$] (2.5,2)

    to[L, l_=$L_f$] (3.5,2)
    to[R, l_=$R_f$] (5,2)
    to[C, l_=$C_f$] (5,0)
    to(5,0) 

    (5,2) to[L, l_=$L_g$] (6.25,2)
    to[R, l_=$R_g$] (7.75,2)
    (8.5,2) node[circle,draw]{$\infty$}
    node[above=6pt]{$\Edq$}
    
    (7.75,2) -- (8.1,2)
    
    (0,0) -- (8.5,0);
\end{circuitikz}
\end{center}
    \caption{We model an inverter (though results hold if the inverter is replaced by a rectifier) as a controllable voltage source in the $\mathrm{dq}$ reference frame connected through an $RLC$ filter and an $RL$ line to an infinite bus.}
    \label{fig:circuit}
\end{figure}

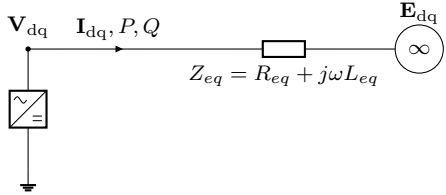
\begin{figure}[ht]
    \begin{center}
\begin{circuitikz} [american voltages,scale=0.8, font=\footnotesize]
\ctikzset{bipoles/length=7mm} 
\draw
    (0,0) node[ground]{} to[sdcac] (0,2)
    (0,2) node[circ,label=above:$\Vdq$]{}
    -- (0.5,2)
    to[short, i=${\Idq,P,Q}$] (2.5,2)

    to[generic,l_=${Z_{eq} = R_{eq} + j \omega L_{eq}}$] (6,2)

    (6.5,2) node[circle,draw]{$\infty$}
    node[above=6pt]{$\Edq$}
    
    (6,2) -- (6.1,2);
\end{circuitikz}
\end{center}
    \caption{We can instead represent the network using the equivalent impedance model of the filter and line impedance, $Z_{eq}=R_{eq} + j \omega L_{eq}$. Depending on the filter parameters, $L_{eq}$ may be positive (inductive) or negative (capacitive).}
    \label{fig:equiv_circuit}
\end{figure}

We assume the three-phase system is balanced and apply the direct-quadrature ($\mathrm{dq}$) transformation with reference to the grid voltage angle $\theta$ that rotates at fixed frequency $\omega$ to express all three-phase voltages and currents in $\mathbb{R}^2$ instead of $\mathbb{R}^3$. Further assuming the grid has constant voltage \emph{magnitude} $E$, we can express the grid voltage as $\Edq = \bma E & 0 \ebma^\top.$ Neglecting high-frequency line and filter dynamics, we express the inverter voltage as $$\Vdq=(R_{eq} + j \omega L_{eq}) \Idq + \Edq,$$ or converting the complex impedance into a real-valued matrix:

\begin{equation}\label{eq:Vdq_qss}
\Vdq = \bd{Z}_{eq} \Idq + \Edq,
\end{equation}

where $$\bd{Z}_{eq} = \bma R_{eq} & - \omega L_{eq} \\ \omega L_{eq} & R_{eq} \ebma.$$

Here, $\Vdq$ and $\Idq$ denote the inverter's voltage and current, respectively, in the grid $\mathrm{dq}$ reference frame. We use the $\mathrm{dq}$ reference frame of the grid for notational convenience, but as discussed in~\cite{streitmatter2025geometry}, convexity results and optimal setpoints do not depend on the reference angle used. Because of the linear relationship between $\Vdq$ and $\Idq$, we can consider either variable as the control or state variable. For example, updating the inverter's current output by $\Delta \Idq$ is equivalent to adjusting the voltage by $\Delta \Vdq=\bd{Z}_{eq} \Delta \Idq$ as long as there is sufficient timescale separation between inverter response times and disturbances to grid voltage $\Edq$. Thus, while we treat $\Idq$ as the system state to ease the analysis of a current-constrained controller, the same behavior can be achieved through voltage control as is common in voltage-source grids \cite{li2022revisiting}.

All control actions must keep the converter within its safe operating region to protect internal components from excessive temperatures and inductor saturation caused by overcurrent~\cite{ieee2022ieee,kassakian2023principles}. Limiting per-phase current is necessary to protect per-phase switches and inductors, but under the balanced three-phase assumption, constraining $||\Idq||$ is equivalent. Therefore, we define the safe operating region, $\mathcal{I}$, as the set of all currents with magnitude less than $I_\mathrm{max}$ and require $\Idq \in \mathcal{I}$ at all times \cite{unified2026, streitmatter2025geometry}:
$$\mathcal{I} := \left\{ \Idq \mid ||\Idq||_2^2 \leq I_\mathrm{max}^2\right\}. $$

The question of where within the safe set $\mathcal{I}$ to operate  depends on the desired outputs. Of common interest for grid-connected inverters are the device's active power output $P$, reactive power $Q$, and voltage magnitude, or equivalently, the square of the voltage magnitude $\Vm.$ Each of these quantities is a quadratic function of $\Idq$ and $\Vdq$:
\begin{subequations} \label{eq:pqv}
\begin{align}
    \label{eq:P_inverter}
    P &= \frac{3}{2} \mathbf{I}_\mathrm{dq}^\top \mathbf{V}_\mathrm{dq} \\
    \label{eq:Q_inverter}
    Q &= \frac{3}{2} \mathbf{I}_\mathrm{dq}^\top \bd J
    \mathbf{V}_\mathrm{dq} \\
    \label{eq:V2_inverter}
    \Vm &= \mathbf{V}_\mathrm{dq}^\top \mathbf{V}_\mathrm{dq},
\end{align}
\end{subequations}
where $\bd J= \bma 0 & 1 \\ -1 & 0 \ebma$. Substituting~\eqref{eq:Vdq_qss} into~\eqref{eq:pqv}, we express these quantities in terms of only the inverter current, grid voltage, and equivalent impedance:

\begin{subequations}
\begin{align*}
    P &= \frac{3}{2} \Idq^\top (\bd{Z}_{eq} \Idq + \Edq) \\
    Q &= \frac{3}{2} \Idq^\top \bd{J} (\bd{Z}_{eq} \Idq + \Edq) \\
    \Vm &= (\bd{Z}_{eq} \Idq + \Edq)^\top (\bd{Z}_{eq} \Idq + \Edq).
\end{align*}
\end{subequations}

Distributing terms and using the fact that $\bd x^T \bd M \bd x=\bd x^T \frac{\bd{M}+\bd{M}^T}{2} \bd x$ we get:
\begin{subequations} \label{eq:pqv_idq}
\begin{align}
    \label{eq:P_idq}
    P &= \frac{3}{2} R_{eq} ||\Idq||_2^2 + \frac{3}{2} \Edq^\top \Idq \\
    \label{eq:Q_idq}
    Q &= \frac{3}{2} \omega L_{eq} ||\Idq||_2^2 + \frac{3}{2} \Edq^\top \bd{J}^\top \Idq \\
    \label{eq:V2_idq}
    \Vm &= (R_{eq}^2 + \omega^2 L_{eq}^2)||\Idq||_2^2 + 2 \Edq^\top \bd{Z}_{eq} \Idq + ||\Edq||_2^2 .
\end{align}
\end{subequations}

$P$, $Q$, and $\Vm$ are all quadratic functions of the inverter's current. The next section formulates a constrained optimization problem subject to $\Idq \in \mathcal{I}$ through which we can customize a function of $P$, $Q$, and/or $\Vm$ to provide different grid services.  

\subsection{Optimal Control Problem}

Because we can control $I_\mathrm{d}$ and $I_\mathrm{q}$ (or $V_\mathrm{d}$ and $V_\mathrm{q}$) independently, we can co-optimize two of the three quantities $(S_1, S_2) \in (P, Q, \Vm).$ To track desired setpoints $\oSone$ and $\oStwo$, we can formulate the optimization problem:
\begin{equation}
    \label{eq:opt_control}
    \begin{aligned}
    \min_{\Idq} \quad & \frac12 (S_1-\oSone)^2+\gamma \frac12 (S_2-\oStwo)^2 \\
    \mbox{s.t.} \quad & \Idq \in \mathcal{I},
    \end{aligned}
\end{equation}
where $\gamma$ is a tradeoff parameter. The pair of variables we choose, $((P,Q), (P,\Vm), \ \mbox{or} \ (Q,\Vm))$, depends on which grid services we seek to provide.

\subsubsection{$P,Q$ Tracking}
Any solar-plus-storage system can implement $(P,Q)$ tracking which is desirable to control power factor or regulate local grid voltage. This is similar to how grid-following inverters operate. 

\subsubsection{$P,V^2$ Tracking}
$(P,V^2)$ tracking is similar to the operation of grid-forming inverters by simultaneously managing the device's voltage level and power output. If the active power reference is set to zero, $(P,V^2)$ tracking provides services similar to those from a static synchronous compensator (STATCOM) by absorbing or injecting reactive power as needed to regulate voltage. 

\subsubsection{$Q,V^2$ Tracking}
$(Q,V^2)$ tracking can be used to provide fault support if there is sufficient storage headroom available to increase reactive power injection and maintain local voltage levels.

The main challenge of solving \eqref{eq:opt_control} is that it is not convex in $\Idq$ due to the nested quadratics in the objective ($S_1$ and $S_2$ are functions of $||\Idq||_2^2$). Following the approach in~\cite{streitmatter2025geometry}, we rewrite the optimization problem over only $S_1$ and $S_2$ to avoid $\Idq$:
\begin{equation}
    \label{eq:opt_control_S}
    \begin{aligned}
    \min_{S_1,S_2} \quad & \frac12 (S_1-\oSone)^2+\gamma \frac12 (S_2-\oStwo)^2 \\
    \mbox{s.t.} \quad & (S_1,S_2) \in \mathcal{S}.
    \end{aligned}
\end{equation}
Here $\mathcal{S}$ is the feasible output region of the inverter given by 
\begin{equation}
    \label{eq:setS}
    \mathcal{S}:=\left\{ (S_1, S_2) \in \mathbb{R}^2 \mid \Idq \in \mathcal{I}\right\}.
\end{equation} In words, $\mathcal{S}$ is the mapping of set $\mathcal{I}$ to the $(P,Q), (P,\Vm), \ \mbox{or} \ (Q,\Vm))$ space. We prove in Section~\ref{sec:convex} that set $\mathcal{S}$ is convex regardless of filter topology or line parameters, a more general result than in~\cite{streitmatter2025geometry}. Consequently, \eqref{eq:opt_control_S} is a convex optimization problem for the single inverter-infinite bus network with an $RLC$ filter in Figure~\ref{fig:circuit} as well as for the network with generic impedance in Figure~\ref{fig:equiv_circuit}.

\subsection{Relation to grid-forming control}
In Section~\ref{sec:module}, we will introduce a converter controller that acts directly on the $I_\mathrm{d}$ and $I_\mathrm{q}$ components of the $\Idq$ vector (or equivalently $\Vdq$) through modified gradient descent of~\eqref{eq:opt_control_S}. In contrast, GFM controllers individually control voltage magnitude $||\Vdq||$ and angle $\theta_i=\arg(\Vdq)$ instead of $V_\mathrm{d}$ and $V_\mathrm{q}$. One motivation for this approach is to explicitly ensure a desirable power-frequency response around the nominal frequency $\omega_0$, similar to the behavior of synchronous generators. This can be accomplished, for example, via the droop control law~\cite{unified2026, li2022revisiting}:
\begin{equation}\label{eq:droop}
    \dot{\theta_i}=\omega_i = \omega_0 - m_P(P-\bar{P}).
\end{equation} 
The frequency controller directly increases or decreases the converter voltage angle $\theta_i$ relative to the grid voltage angle until converter power output stabilizes and $\omega_i$ synchronizes with the grid frequency.

Under our approach, controlling $V_\mathrm{d}$ ($I_\mathrm{d}$) and $V_\mathrm{q}$ ($I_\mathrm{q}$) means we do not explicitly control $\omega_i$. However, we can still compute the frequency dynamics of our controller: approximating the gradient descent-based algorithm as a continuous time controller sets $\dot{\bd V}_\mathrm{dq}$ ($\dot{\bd I}_\mathrm{dq}$) from which we can calculate $\dot{\theta_i}$. Because $\Vdq$ can be shown to converge under our approach, we know the converter frequency must still synchronize with the grid in steady-state in order for $\Vdq$ to settle at a fixed magnitude and angle. We can then interpret our implicit frequency response relative to explicit GFM frequency control laws. In doing so, we argue that optimal control laws derived from~\eqref{eq:opt_control_S} result in preferable frequency dynamics than direct frequency control (\textit{e.g.,} the linear relationship given by~\eqref{eq:droop}) as the former achieves large-signal stable, safe (no current magnitude violations), and optimal converter behavior.

\section{Convexity of the Feasible Region} \label{sec:convex}
In this section, we show the feasible output region $\mathcal{S}$ of an inverter connected to an infinite bus through equivalent impedance $Z_{eq}$ is convex, even for capacitive networks (see Figure~\ref{fig:opregion_cap}). This result is stated in Theorem~\ref{thm:main}, similar to Theorem 1 of~\cite{streitmatter2025geometry} but for the more general converter-infinite bus network model of Figure~\ref{fig:equiv_circuit}. 

\begin{thm}\label{thm:main}
    Let $(S_1,S_2)$ be a pair of points formed by choosing any two of the three quantities $P,Q,\Vm$. Let $\mathcal{S} \in \mathbb{R}^2$ be the set of all achievable points $(S_1,S_2)$ by $\Idq \in \mathcal{I}$, where $\Idq$ is the current of a converter connected to an infinite bus through some equivalent impedance $Z_{eq}$. Then $\mathcal{S}$ is convex.
\end{thm}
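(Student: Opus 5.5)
The plan is to exploit a structural feature of \eqref{eq:pqv_idq}: each of $P$, $Q$, $\Vm$ has the form
\[
S = a\,\|\Idq\|_2^2 + \bd b^\top \Idq + c
\]
for a scalar $a$ (namely $\tfrac32 R_{eq}$, $\tfrac32\omega L_{eq}$, or $R_{eq}^2+\omega^2L_{eq}^2$), a vector $\bd b\in\mathbb{R}^2$ and a constant $c$. The quadratic part is always a scalar multiple of $\|\Idq\|_2^2$; it never involves an indefinite or anisotropic matrix, and the sign of $a$ is irrelevant. This is the only place the impedance enters, so the argument will cover any $Z_{eq}$, including the capacitive case $L_{eq}<0$.

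\textbf{Step 1: lift to a paraboloid.} Introduce the lifted variable $r=\|\Idq\|_2^2$ and define
\[
K:=\{(r,\bd x)\in\mathbb{R}\times\mathbb{R}^2 : r=\|\bd x\|_2^2,\ \|\bd x\|_2\le I_\mathrm{max}\}.
\]
Then the map $\Idq\mapsto(S_1,S_2)$ is the restriction to $K$ of an affine map $\mathcal{A}(r,\bd x)=L(r,\bd x)+\bd c$, where $L:\mathbb{R}^3\to\mathbb{R}^2$ is linear with rows $(a_1,\bd b_1^\top)$ and $(a_2,\bd b_2^\top)$. Hence $\mathcal{S}=\mathcal{A}(K)$. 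Since translation preserves convexity, it suffices to show that $L(K)$ is convex.

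\textbf{Step 2: replace $K$ by its convex hull.} The convex hull of $K$ is the compact solid paraboloid cap
\[
C:=\{(r,\bd x): \|\bd x\|_2^2\le r\le I_\mathrm{max}^2\},
\]
and $L(C)$ is convex as a linear image of a convex set. Clearly $L(K)\subseteq L(C)$, so the task reduces to the reverse inclusion: every fiber $L^{-1}(\bd y)$ that meets $C$ also meets $K$.

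\textbf{Step 3: the fiber argument (main step).} Since $L$ maps $\mathbb{R}^3$ to $\mathbb{R}^2$, its kernel is nontrivial, so each nonempty fiber contains a line $\ell$. If $\ell$ meets the compact convex set $C$, the intersection is a segment (possibly a single point) whose endpoints lie on $\partial C$. The boundary $\partial C$ consists of two pieces: the paraboloid surface $\{r=\|\bd x\|^2\le I_\mathrm{max}^2\}$, which is exactly $K$, and the top disk $\{r=I_\mathrm{max}^2,\ \|\bd x\|\le I_\mathrm{max}\}$.
\begin{itemize}
\item If some endpoint lies on the paraboloid surface, that endpoint is a point of $K$ on $\ell$, and we are done.
\item Otherwise both endpoints lie in the top disk. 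If they are distinct, $\ell$ lies in the plane $r=I_\mathrm{max}^2$ and passes through the disk, so it exits the disk at a point with $\|\bd x\|=I_\mathrm{max}$. That point has $r=\|\bd x\|^2$, so it lies in $K$.
\item If the segment degenerates to a single point of the open top disk, that point would be interior to $C$ along $\ell$, which is a contradiction. So this case cannot occur, or it reduces to the previous one.
\end{itemize}
In every case, $L(C)\subseteq L(K)$, and therefore $\mathcal{S}=L(C)+\bd c$ is convex.

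I expect Step 3 to be the main obstacle. The care needed is in confirming that the boundary of $C$ really splits only into the surface and the top disk, and in handling the edge cases: a line tangent to the cap, and degenerate choices such as $a_1=a_2=0$ or linearly dependent rows of $L$. In each of these the same endpoint argument should still apply, because a nontrivial kernel always exists.
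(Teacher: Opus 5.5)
Your proof is correct, and it takes a genuinely different route from the paper's. The paper rewrites $\mathcal{S}$ as the normalized set $\mathcal{C}$ in \eqref{eq:setC} and asserts that the coefficient vectors $\bd a,\bd b$ are always linearly independent. It then cites Lemma~1 of the earlier work, noting that the lemma's proof does not depend on the signs of $\alpha,\beta$, and treats $\zeta_1,\zeta_2$ as a translation. You share the key structural observation: the quadratic parts are scalar multiples of $\|\Idq\|_2^2$, so everything factors through $(r,\bd x)=(\|\bd x\|_2^2,\bd x)$. You then prove convexity directly, showing $L(K)=L(\mathrm{conv}\,K)$ with the fiber/segment argument. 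That argument uses only $\ker L\neq\{\bd 0\}$, so it needs no linear independence, no sign conditions, and not even $\Edq\neq\bd 0$.

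This is a real gain. Because $\bd Z_{eq}^\top\Edq=R_{eq}\Edq+\omega L_{eq}\bd J\Edq$, the coefficient vectors are parallel for $(P,\Vm)$ when $L_{eq}=0$ and for $(Q,\Vm)$ when $R_{eq}=0$. In those cases the hypothesis of Lemma~1 fails and the paper's proof does not apply, while yours still does. You also get the explicit convex description $\mathcal{S}=\mathcal{A}(C)$. By a Schur complement, $C$ is exactly the image of the paper's $\mathcal{W}$ under $\bd W\mapsto(W_{11}+W_{22},W_{13},W_{23})$. So your argument also explains the SDP-lift equivalence of Section~\ref{sec:SDP}, using a smaller second-order-cone lift. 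The paper's route is shorter because it relies on prior work for the geometry.

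Two small tightenings. First, take $\ell=\bd p+\mathbb{R}\bd v$ with $\bd p\in C\cap L^{-1}(\bd y)$ and $\bd v\in\ker L\setminus\{\bd 0\}$; an arbitrary line in the fiber need not meet $C$ when $\dim\ker L>1$. Second, in your third bullet the correct reason is this: a point $q$ of the open top disk satisfies $\|\bd x_q\|_2^2<r_q$ strictly, so nearby points of $\ell$ on the side where $r$ does not increase stay in $C$, and hence $\ell\cap C\neq\{q\}$. Your second bullet is actually vacuous, since a chord of the top disk ends on the rim, which already lies in $K$.
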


\begin{proof}
    To see this, we generalize the set $\mathcal{S}$ in~\eqref{eq:setS} for any two of the three quantities $P,Q,\Vm$ using the expressions in~\eqref{eq:pqv_idq} and normalize the current constraints to define an equivalent set: 
    \begin{equation} \label{eq:setC}
        \mathcal{C} := \left\{(\alpha ||\bd{x}||_2^2 + \bd{a}^\top \bd{x} + \zeta_1,\beta ||\bd{x}||_2^2 + \bd{b}^\top \bd{x} + \zeta_2) \mid ||\bd{x}||_2^2 \leq 1 \right\},
    \end{equation}
    where $\alpha,\beta,\zeta_1,$ and $\zeta_2$ are real scalars, and vectors $\bd{a}$ and $\bd{b}$ are each scalar multiples of one of $\Edq,\bd{J}\Edq,$ or $\bd{Z}_{eq}^\top \Edq$. We observe that $\bd{a}$ and $\bd{b}$ are always linearly independent: $\Edq$ and $\bd{J}\Edq$ are orthogonal, and $\bd{Z}_{eq}^\top \Edq$ is a linear combination of $\Edq$ and $\bd{J}\Edq$. The vector $\bd{x}\in \mathbb{R}^2$ represents our state vector $\Idq$. 
    
    Then recall Lemma 1 from~\cite{streitmatter2025geometry}, noting that its proof does not depend on the sign of scalars $\alpha$ and $\beta$ so can be stated for any $\alpha, \beta \in \mathbb{R}$: \textit{Given linearly independent $\bd{a}, \bd{b} \in \mathbb{R}^2$, the set $\left\{ \left( \alpha ||\bd{x}||_2^2 + \bd{a}^\top \bd{x}, \beta ||\bd{x}||_2^2 + \bd{b}^\top \bd{x} \right) \mid ||\bd{x}||_2^2 \leq 1 \right\}$ is convex.}
    
    This lemma is directly applicable to~\eqref{eq:setC} (constants $\zeta_1$ and $\zeta_2$ do not affect convexity), proving $\mathcal{C}$ and thus $\mathcal{S}$ is convex. This holds irrespective of the equivalent impedance parameter values $R_{eq}$ and $L_{eq}$.
\end{proof}

\begin{figure}[ht]
\centering
    \includegraphics[width=\linewidth]{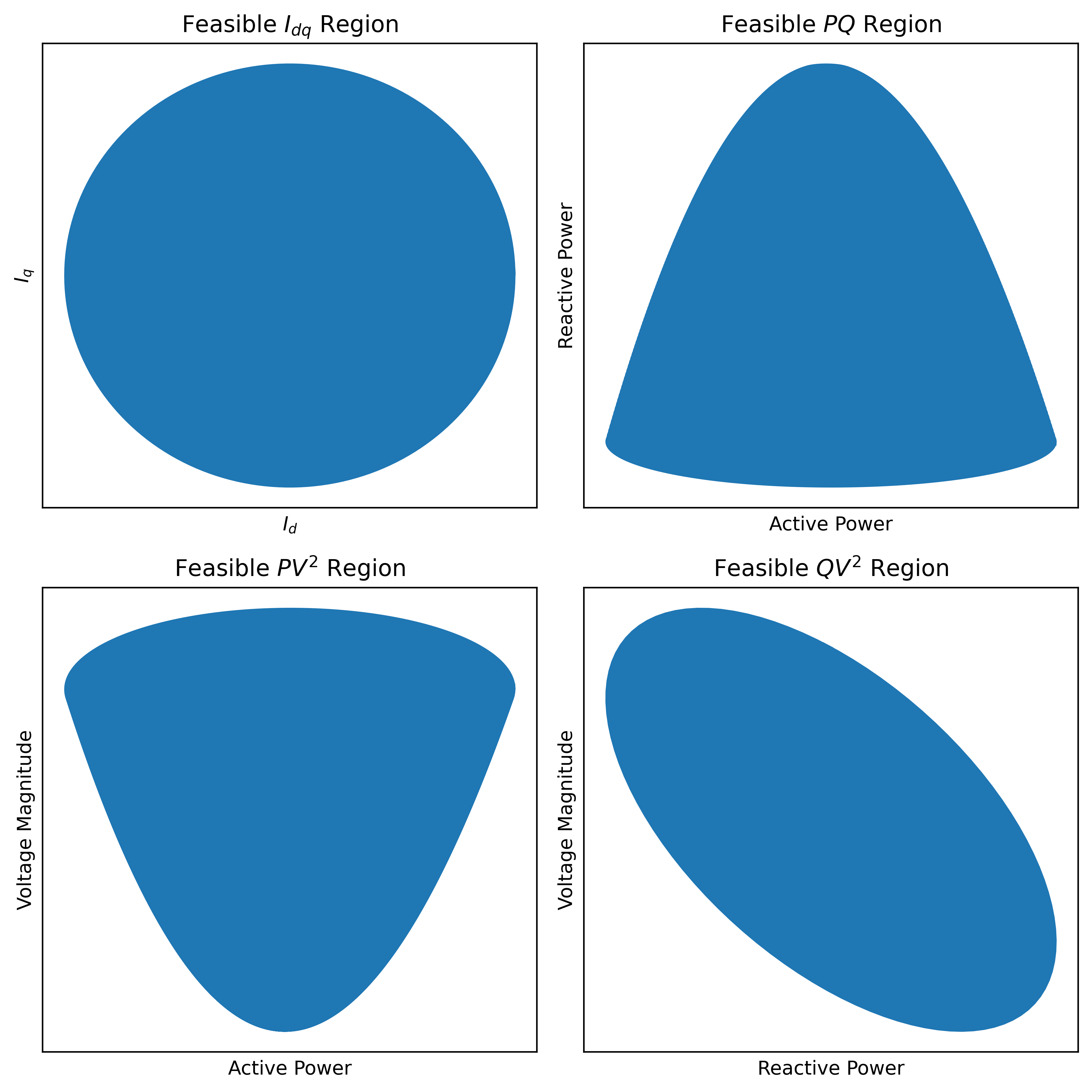}
    \caption{Limiting the current magnitude forms the safe set $\mathcal{I}$ (top left) which is a circle of radius $I_\mathrm{max}$. We transform the set of safe currents to the $\mathcal{S}=(P,Q)$ region (top right), $(P,\Vm)$ region (bottom left), and $(Q,\Vm)$ region (bottom right) for the extreme case of a converter network with capacitive reactance ($L_{eq}<0$), and we observe convexity is still preserved.}
    \label{fig:opregion_cap}
\end{figure}

\section{Dynamic Controller} \label{sec:module}
\subsection{Semi-Definite Programming Formulation} \label{sec:SDP}
Sets $\mathcal{S}$ and $\mathcal{C}$ lack direct algebraic descriptions as written, so we provide an equivalent description using linear matrix inequalities. For a 3 by 3 matrix $\bd{W}$, Lemma 2 of \cite{streitmatter2025geometry} states that the set
    \begin{equation*}
        \hat{C}=\{ (\Tr(\bd{M}_1 \bd{W}),\Tr(\bd{M}_2 \bd{W}) | \bd{W} \in \mathcal{W} \}
    \end{equation*} is equivalent to set $\C$ defined in~\eqref{eq:setC}, where
    \begin{equation} \label{eq:M12} 
    \bd{M}_1=\begin{bmatrix} \alpha \bd{I}_2 & \frac12 \bd{a} \\ \frac12 \bd{a}^T & \zeta_1 \end{bmatrix} \mbox{ and } \bd{M}_2=\begin{bmatrix} \beta \bd{I}_2 & \frac12 \bd{b} \\ \frac12 \bd{b}^T & \zeta_2 \end{bmatrix},
    \end{equation}
    \begin{equation} \label{eq:Wset} 
    \mathcal{W}=\left\{\bd{W} \in \mathbb{S}^3_+ \mid W_{11} + W_{22} \leq I_\mathrm{max}^2, W_{33} =1 \right\},
    \end{equation}
and $\bd{I}_2$ is the 2 by 2 identity matrix. Note that any rank-1 matrix $\bd W \in \mathcal{W}$ corresponds exactly to the outer product $\bma \Idq^\top & 1 \ebma^\top \bma \Idq^\top & 1 \ebma$ while rank-2 and rank-3 $\bd W$ do not have physical interpretations with respect to $\Idq$.
We then formulate~\eqref{eq:opt_control_S} as the semi-definite program (SDP): 
    \begin{subequations} \label{eq:sdp}
    \begin{align}
        \min \; & \frac12 (\Tr(\bd{M}_1 \bd{W})-\oSone)^2+\gamma \frac12 (\Tr(\bd{M}_2 \bd{W})-\oStwo)^2 \\
        \mbox{s.t. } & \bd{W} \in \mathcal{W},
    \end{align}
\end{subequations}
where $\bd M_1$ and $\bd M_2$ can be set to $\bma \frac32 R_{eq} \bd{I}_2 & \frac34 \Edq \\ \frac34 \Edq^\top & 0  \ebma$ for $P$, $ \bma \frac32 \omega L_{eq} \bd{I}_2 & \frac34 \bd J \Edq \\ \frac34 \Edq^\top \bd J^\top & 0 \ebma$ for $Q$ and $\bma (R_{eq}^2+\omega^2 L_{eq}^2) \bd{I}_2 & \bd{Z}_{eq}^\top  \Edq \\ \Edq^\top \bd{Z}_{eq} & ||\Edq||_2^2 \ebma $ for $\Vm$.

Problems~\eqref{eq:opt_control_S} and~\eqref{eq:sdp} are equivalent in that they share the same objective value. Additionally, the safe setpoints $S_1^*$ and $S_2^*$ which minimize~\eqref{eq:opt_control_S} are equal to $\Tr(\bd{M}_1 \bd{W}^*)$ and $\Tr(\bd{M}_2 \bd{W}^*),$ respectively, for $\bd{W}^*$ that minimize~\eqref{eq:sdp}. 

However, there is no guarantee that $\bd{W}^*$ is a rank-1 matrix from which we can directly extract the current $\Idq^*$ that achieves $S_1^*$ and $S_2^*$. Adding a small nuclear norm regularization term (equal to the trace for positive semi-definite matrices) to the objective function addresses this limitation and ensures rank-1 minimizers $\bd W^*$ as stated in the following lemma. The proof can be found in the appendix. 

\begin{lemma} \label{thm:rank1}
Given convex and differentiable function $f_0=f_1(\Tr(\bd M_1 \bd W))+f_2(\Tr(\bd M_2 \bd W))$, and arbitrary constant $\rho>0$, the optimal solution $\bd{W}^*$ to 
\begin{equation} \label{eq:rank1_SDP}
    \begin{aligned}
    \arg \min \quad & f_0(\Tr(\bd{M}_1\bd{W}),\Tr(\bd{M}_2\bd{W})) + \rho \Tr(\bd{W}) \\
    \mbox{s.t.} \quad & \bd W \in \mathcal{W}
    \end{aligned}
\end{equation} will have rank at most 1. 

\end{lemma}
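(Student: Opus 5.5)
The plan is to reduce the nonlinear SDP~\eqref{eq:rank1_SDP} to a linear SDP at the optimum and then read off the rank from a complementary slackness argument with the dual matrix. Let $\bd W^*$ be any minimizer, and set $\lambda_1 = f_1'(\Tr(\bd M_1\bd W^*))$ and $\lambda_2 = f_2'(\Tr(\bd M_2\bd W^*))$. The objective is a convex differentiable function of $\bd W$ (a convex function composed with linear maps, plus a linear term), and $\mathcal{W}$ is convex. So first-order optimality says that $\bd W^*$ also minimizes the linear function
\[
\Tr\big((\lambda_1\bd M_1+\lambda_2\bd M_2+\rho \bd I_3)\bd W\big)
\]
over $\mathcal{W}$. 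It therefore suffices to show that every minimizer of this linear SDP has rank at most one.

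Next I would write the Lagrange dual of the linear SDP. Introduce a multiplier $\mu\ge 0$ for $W_{11}+W_{22}\le I_\mathrm{max}^2$ and a free multiplier $\nu$ for $W_{33}=1$. Dual feasibility requires
\[
\bd Z := \lambda_1\bd M_1+\lambda_2\bd M_2+\rho\bd I_3+\mu\,\mathrm{diag}(1,1,0)+\nu\,\mathrm{diag}(0,0,1)\succeq 0 .
\]
Strong duality and dual attainment hold by Slater's condition: $\bd W=\mathrm{diag}(\epsilon,\epsilon,1)$ with small $\epsilon>0$ is strictly feasible, and the primal optimal value is finite because $\mathcal{W}$ is compact. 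Complementary slackness then gives $\Tr(\bd Z\bd W^*)=0$. Since both matrices are positive semidefinite, this means $\bd Z\bd W^*=0$, so the range of $\bd W^*$ lies in the null space of $\bd Z$, and hence $\mathrm{rank}(\bd W^*)\le 3-\mathrm{rank}(\bd Z)$.

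The key step, and the only place where the structure of $\bd M_1,\bd M_2$ from~\eqref{eq:M12} enters, is to show $\mathrm{rank}(\bd Z)\ge 2$. By~\eqref{eq:M12}, the upper-left $2\times 2$ block of $\bd Z$ is $c\,\bd I_2$ with $c=\lambda_1\alpha+\lambda_2\beta+\rho+\mu$, and its off-diagonal column is $\tfrac12(\lambda_1\bd a+\lambda_2\bd b)$. If $c>0$, the block $c\bd I_2$ already has rank two, and we are done. If instead $c\le 0$, positive semidefiniteness of $\bd Z$ forces $c=0$ and hence $\lambda_1\bd a+\lambda_2\bd b=0$. By the linear independence of $\bd a,\bd b$ established in the proof of Theorem~\ref{thm:main}, this gives $\lambda_1=\lambda_2=0$. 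But then $c=\rho+\mu>0$, a contradiction. Hence $\mathrm{rank}(\bd Z)\ge 2$ and $\mathrm{rank}(\bd W^*)\le 1$.

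Because $W^*_{33}=1$, the rank is in fact exactly one, so $\bd W^*=\bma \Idq^{*\top} & 1\ebma^\top\bma \Idq^{*\top} & 1\ebma$ and the current $\Idq^*$ can be extracted directly. I expect the main care to lie in two places. The first is justifying the passage from the nonlinear objective to the linearized one, which needs only convexity and differentiability of $f_1,f_2$. The second is verifying the constraint qualification that yields a dual certificate $\bd Z$. The rank-counting step itself is short once the block structure $c\bd I_2$ is exploited, and it is there that $\rho>0$ is essential: it rules out the degenerate case $\lambda_1=\lambda_2=\mu=0$.
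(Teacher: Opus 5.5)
Your proof is correct. It shares the paper's skeleton: a dual certificate $\bd Z\succeq 0$ built from $f_1'\bd M_1+f_2'\bd M_2+\rho\bd I_3$ plus diagonal multipliers, then complementary slackness, then $\mathrm{rank}(\bd Z)\ge 2$. Where you differ is in how you prove the rank bound, and your argument is genuinely different and more elementary.

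The paper splits into three cases according to which of $f_1',f_2'$ vanish. When exactly one is nonzero, it invokes a graph-theoretic theorem: a positive semidefinite matrix fitting a tree on $n$ vertices has rank at least $n-1$. When both are nonzero, it checks that determinants of $f_1'\bd C+f_2'\bd D$ are positive.

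You instead argue once and uniformly:
the upper-left block is $c\,\bd I_2$; $\bd Z\succeq 0$ forces $c\ge 0$; and $c=0$ would force the column $\tfrac12(\lambda_1\bd a+\lambda_2\bd b)$ to vanish, hence $\lambda_1=\lambda_2=0$ and $c=\rho+\mu>0$, a contradiction. This gains you three things:
\begin{itemize}
\item It covers all three of the paper's cases at once.
\item It makes explicit the step the paper's Case~3 leaves implicit, namely that a nonzero off-diagonal column together with $\bd Z\succeq 0$ forces $c>0$.
\item It shows the paper's determinant computation is just a restatement of the linear independence of $\bd a,\bd b$.
\end{itemize}
Linearizing at $\bd W^*$ and checking Slater explicitly also justifies the existence of the multipliers. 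The paper takes these for granted when it writes down stationarity and complementary slackness.

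The paper's tree-fitting tool has slightly more reach, since it does not need the two diagonal entries of the upper-left block to coincide. The structure of $\bd M_1,\bd M_2$ gives you the scalar block for free, so your route loses nothing.

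Both arguments rest on the same input, the linear independence of $\bd a$ and $\bd b$. This needs $\Edq\neq\bd 0$. For the pairs $(P,\Vm)$ and $(Q,\Vm)$ it also needs $L_{eq}\neq 0$ and $R_{eq}\neq 0$ respectively. The paper assumes this as well, through its ``non-zero $i,j,z_1,z_2$''.
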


The proof of this lemma is given in Appendix~\ref{app:proof_rank1}. The lemma provides a direct method to find $\Idq^*$ at optimality, but we need an alternative method to extract $\Idq$ from $\bd W$ when $\bd{W}$ is not optimal and possibly high rank. For example, a controller derived from \eqref{eq:rank1_SDP} may update $\bd{W}^k \rightarrow \bd{W}^{k+1}$ until convergence to $\bd{W}^*$. Hardware implementation requires physical current values $\Idq^k$ that correspond to $\bd{W}^k$ for each $k$.

\subsection{Extract $\Idq$ from high rank $\bd W$} \label{sec:findIdq}

For any $\bd W$, we know $S_1=\Tr(\bd M_1 \bd W)$ and $S_2=\Tr(\bd M_2 \bd W).$ From~\eqref{eq:pqv_idq} we have a system of two quadratic equations in $\Idq$, or $\bd x$:
\begin{equation} \label{eq:quad_sys}
\bma \alpha \\ \beta \ebma ||\bd x||_2^2 + \bma \bd a^\top \\ \bd b^\top \ebma \bd x + \bma \zeta_1 \\ \zeta_2 \ebma  = \bma S_1 \\ S_2 \ebma.
\end{equation} 

We multiply both sides by $\bd P^{-1} = \bma \bd a^\top \\ \bd b^\top \ebma ^{-1},$ rearrange terms, and solve for $\bd x$ to get
\begin{equation} \label{eq:quad_sys_x}
\bd{x} = \bd d - \mu \bd c,
\end{equation} 
where $\bd{d} = \bd P^{-1} \bma \zeta_1 - S_1 \\ \zeta_2 - S_2 \ebma,$ $\bd{c} = \bd P^{-1} \bma\alpha \\ \beta \ebma,$ and 
\begin{equation} \label{eq:mu}
    \mu = ||\bd x||_2^2,
\end{equation}
the current magnitude. We substitute~\eqref{eq:quad_sys_x} for $\bd x$ into~\eqref{eq:mu} to get a single quadratic equation with one unknown scalar ($\mu$):
\begin{equation} \label{eq:quad_mu}
||\bd c||_2^2 \mu^2 - (2 \bd d^\top \bd c + 1) \mu + ||\bd d||_2^2 =0.
\end{equation} 

The quadratic formula gives the non-negative roots $\mu_1$ and $\mu_2$ that solve~\eqref{eq:quad_mu}, and we select the smaller of the two ($\mu_1$). This ensures a safe current with minimal magnitude $\bd x_1 = \Idq = \bd d - \mu_1 \bd c$ that achieves the same $S_1$ and $S_2$ as $\bd W$.

\subsection{Modified Projected Gradient Descent of the SDP}

Solving~\eqref{eq:rank1_SDP} to optimality gives a feasible, rank-1 $\bd{W}^*$ and therefore $\Idq^*$, but it does not provide a safe trajectory to reach $\Idq^*$ from an initial $\Idq^0 \neq \Idq^*$. It is not practical to solve~\eqref{eq:rank1_SDP} to optimality at each time step because $\Idq^*$ depends on the grid voltage, $\Edq$. A converter may not be able to significantly adjust its outputs immediately following a large grid disturbance, especially with filter capacitors present. An immediate response also may not be desirable as $\Edq$ may recover within a few cycles. This incentivizes a small delay between detected changes in $\Edq$ and control response. Thus, we propose the dynamic optimal controller (OC) of Algorithm~\ref{alg:controller} which descends the gradient of~\eqref{eq:rank1_SDP} and converts each $\bd W^{k+1}$ to $\Idq^{k+1}$ for hardware implementation. The OC acts on converter current so that it can directly limit current magnitude, but updating $\Idq^k$ to $\Idq^{k+1}$ can alternatively be implemented via voltage control as is common in GFM applications. This is done using the linear relationship between inverter current and voltage: $\Vdq^{k+1}=\Vdq^{k} + \bd{Z}_{eq} (\Idq^{k+1} - \Idq^k)$ (assuming $\Edq$ is fixed between $k$ and $k+1$). 

Of note is that the gradient $\nabla_{\bd W} f(\bd{W})$ of~\eqref{eq:rank1_SDP} includes $\bd{M}_1$ and $\bd{M}_2$ which require the grid voltage. $\Edq$ is likely not known in practice but can be indirectly monitored through online feedback. At each step $k$, we know the actual inverter voltage $\Vdq^k$ and power $P^k$ that result from $\Idq^k$ as well as the values we expected from~\eqref{eq:Vdq_qss} and~\eqref{eq:P_inverter} given our $\Edq$ estimate. We then update our grid voltage estimate to resolve the difference between actual and expected voltages and power outputs. Updating the $\Edq$ estimate at every cycle produces an accurate estimate after only a few iterations post-disturbance (which is supported by simulation results in Section~\ref{sec:simulation}), assuming sufficient timescale separation between controller and $\Edq$ dynamics, a standard assumption of converter control analysis \cite{positiveseq2024,rokrok2022}.

\begin{algorithm}
\caption{Optimal Control via Modified Gradient Descent\newline\footnotesize\emph{Note:} The algorithm controls $\Idq$ to easily limit currents, but each $\Idq$ update can be achieved via $\Vdq$ control instead using $\Delta \Vdq=\bd{Z}_{eq} \Delta \Idq$.}\label{alg:controller}
\begin{algorithmic}[1]
    \REQUIRE Initial current $\Idq^0 \in \mathcal{I}$, current magnitude limit $I_\mathrm{max}$, setpoints $\oSone$ and $\oStwo$, network parameters $\Edq$ and $\bd{Z}_{eq}$, objective function $f$ with tradeoff parameter $\gamma\geq0$ and regularization constant $\rho>0$, and step-size $\alpha>0$
    \ENSURE Safe steady-state current $\Idq^*$ that solves~\eqref{eq:opt_control}
    \STATE Convert $\Idq^0 \in \mathbb{R}^2$ to its rank-1 matrix representation: $\bd{W}^{0} \gets \bma \Idq^{0 \; \top} & 1 \ebma ^\top \bma \Idq^{0 \; \top} & 1 \ebma.$
    \FOR{$k=1, 2, \dots$}
        \STATE Compute the gradient of the control objective $f$ with respect to $\bd{W}^k$, using $\bd M_1$ and $\bd M_2$ as defined in~\eqref{eq:M12}: $\nabla f(\bd{W}^k) \gets (S_1^k - \oSone) \bd M_1 + \gamma (S_2^k - \oStwo) \bd M_2 + \rho \bd I_3.$ 
        
        \STATE Take a projected gradient descent step within $\mathcal{W}$: $\bd{W}_\mathrm{PGD}^{k+1} \gets \arg \min_{\bd{W} \in \mathcal{W}} || \bd{W} - (\bd{W}^k - \alpha \nabla f(\bd{W}^k))||.$ \label{step:pgd}
        
        \STATE $\bd{W}_\mathrm{PGD}^{k+1}$ is likely not rank-1, so find the smallest-magnitude $\Idq$ that achieves the same objective as $\bd W_\mathrm{PGD}^{k+1}$ by solving for $\mu$ in~\eqref{eq:quad_mu}: $\Idq^{k+1} \gets \bd{d} - \mu_1 \bd{c}.$ \label{step:rank1}
        
        \STATE Convert $\Idq^{k+1} \in \mathbb{R}^2$ to its rank-1 matrix representation: $\bd{W}^{k+1} \gets \bma \Idq^{k+1 \; \top} & 1 \ebma ^\top \bma \Idq^{k+1 \; \top} & 1 \ebma.$ \label{step:rank1_part2}
    \ENDFOR
\end{algorithmic}
\end{algorithm}

The main result of this paper is stated in Theorem~\ref{thm:algo_convergence} which guarantees that Algorithm~\ref{alg:controller} converges to an optimal $\Idq^*$. Note that the requirement for an L-smooth objective $f$ is readily satisfied, \emph{e.g.} by quadratic functions. 

\begin{thm}\label{thm:algo_convergence}
    Let $f_0$ be a convex and differentiable L-smooth function of $\Tr(\bd M_1 \bd W)$ and $\Tr(\bd M_2 \bd W)$ over convex domain $\mathcal{W}$ defined in~\eqref{eq:Wset}. Define $f(\bd{W})=f_0(\Tr(\bd M_1 \bd W),\Tr(\bd M_2 \bd W)) + \rho \Tr(\bd{W})$. For a sufficiently small step-size $\alpha>0$,  and arbitrary regularization term $\rho>0$, the optimal controller of Algorithm~\ref{alg:controller} safely converges to a globally optimal current $\Idq^*$ that minimizes $f$. Thus, the controller in Algorithm~\ref{alg:controller} is asymptotically stable. 
\end{thm}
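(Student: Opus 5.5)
The plan is to view Algorithm~\ref{alg:controller} as projected gradient descent on the convex problem~\eqref{eq:rank1_SDP}, with an extra ``re-ranking'' step (Steps~\ref{step:rank1}--\ref{step:rank1_part2}) that never increases the objective and never leaves $\mathcal{W}$. Standard PGD theory for $L$-smooth convex functions will then carry over.

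\textbf{Step 1: smoothness of $f$.} Since $\bd W \mapsto (\Tr(\bd M_1\bd W),\Tr(\bd M_2\bd W))$ is linear, $f_0$ composed with this map is convex and $L'$-smooth on $\mathcal{W}$, with $L' \le L(\|\bd M_1\|_F^2+\|\bd M_2\|_F^2)$. Adding the linear term $\rho\Tr(\bd W)$ preserves both properties. Taking $\alpha \le 1/L'$, the usual sufficient-decrease (descent) lemma for projected gradient steps onto the convex set $\mathcal{W}$ gives
\begin{equation*}
f(\bd W^{k+1}_\mathrm{PGD}) \le f(\bd W^k) - \tfrac{1}{2\alpha}\|\bd W^{k+1}_\mathrm{PGD}-\bd W^k\|_F^2 .
\end{equation*}

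\textbf{Step 2: the re-ranking step, which I expect to be the main obstacle.} I must show that replacing $\bd W^{k+1}_\mathrm{PGD}$ by the rank-1 matrix $\bd W^{k+1}$ built from $\Idq^{k+1}=\bd d-\mu_1\bd c$ keeps $f_0$ unchanged, keeps $\bd W^{k+1}\in\mathcal{W}$, and does not increase $\Tr(\bd W)$.

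\emph{Existence and $f_0$.} By Lemma~2 of~\cite{streitmatter2025geometry}, $(S_1,S_2)=(\Tr(\bd M_1\bd W^{k+1}_\mathrm{PGD}),\Tr(\bd M_2\bd W^{k+1}_\mathrm{PGD}))$ lies in $\C$. Hence some $\bd x$ with $\|\bd x\|^2\le I_\mathrm{max}^2$ solves~\eqref{eq:quad_sys}, so~\eqref{eq:quad_mu} has a real nonnegative root. The smaller root $\mu_1$ satisfies $\mu_1\le\|\bd x\|^2\le I_\mathrm{max}^2$, which gives both safety and $f_0(\bd W^{k+1})=f_0(\bd W^{k+1}_\mathrm{PGD})$.

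\emph{Trace.} Here $\Tr(\bd W^{k+1})=\mu_1+1$. I need $\mu_1+1 \le W_{11}+W_{22}+1$ for the PGD iterate. I would argue this via the relaxation: for any $\bd W\in\mathcal{W}$ achieving $(S_1,S_2)$, set $\bd x=(W_{13},W_{23})$ and $t=W_{11}+W_{22}\ge\|\bd x\|^2$ (Schur complement). The pair $(\bd x,t)$ satisfies the linear system~\eqref{eq:quad_sys} with $\|\bd x\|^2$ replaced by $t$, so $\bd x=\bd d-t\bd c$. The function $g(t)=\|\bd d-t\bd c\|^2-t$ is a convex quadratic with $g(t)\le 0$, whose roots are $\mu_1\le\mu_2$. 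Therefore $t\in[\mu_1,\mu_2]$, and in particular $t\ge\mu_1$. This step needs care: the case $\bd c=0$, and the question of whether the rank-1 point is really the trace minimizer over the fiber. Together these give $f(\bd W^{k+1})\le f(\bd W^{k+1}_\mathrm{PGD})\le f(\bd W^k)$, and safety $\Idq^{k}\in\mathcal{I}$ for all $k$.

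\textbf{Step 3: convergence to the global optimum.} The values $f(\bd W^k)$ are monotone and bounded below, since $\mathcal{W}$ is compact: $W_{33}=1$, the trace is bounded, and the matrices are PSD. Hence $\|\bd W^{k+1}_\mathrm{PGD}-\bd W^k\|\to0$. Any limit point $\bar{\bd W}$ of the rank-1 iterates is rank-1 by closedness, and it is a fixed point of the projected-gradient map. Because $f$ is convex, such a fixed point satisfies the variational inequality and is therefore a global minimizer of~\eqref{eq:rank1_SDP}. By Lemma~\ref{thm:rank1} this minimizer has rank at most one. Strict monotonicity of $f$ in the trace, together with the uniqueness of $(S_1^*,S_2^*)$ when $f_0$ is strictly convex (as in the quadratic case), should give uniqueness of the minimizer and hence convergence of the whole sequence, $\Idq^k\to\Idq^*$. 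Finally, $f(\bd W)-f^*$ serves as a Lyapunov function along the iterates, which yields the asymptotic stability claim.
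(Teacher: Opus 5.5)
Your proposal is correct in outline and follows the same skeleton as the paper: the PGD step decreases $f$, and the re-ranking step keeps $S_1,S_2$ fixed while not increasing the trace. It justifies both halves differently.

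\textbf{Re-ranking step.} The paper recasts Steps~\ref{step:rank1}--\ref{step:rank1_part2} as the trace-minimization problem~\eqref{eq:rank1_adjustment}. It then passes to the partial Lagrangian~\eqref{eq:dual} and reapplies Lemma~\ref{thm:rank1}. This implicitly assumes multipliers $\nu_1,\nu_2$ exist whose Lagrangian minimizer is the constrained minimizer.

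You work directly on the fiber, using $\bd x=(W_{13},W_{23})$, $t=W_{11}+W_{22}$, $\bd x=\bd d-t\bd c$, and the Schur bound $g(t)=\|\bd d-t\bd c\|^2-t\le 0$. This is pure linear algebra with no duality, and it already settles the two worries you flag:
\begin{itemize}
\item If $\bd c=0$, then $g$ is linear with the single root $\mu_1=\|\bd d\|^2$, and $g(t)\le 0\iff t\ge\mu_1$.
\item Equality $t=\mu_1$ forces $\|\bd x\|^2=t$. Then the upper-left $2\times 2$ block of $\bd W$ minus $\bd x\bd x^\top$ is PSD with zero trace, so it vanishes. Hence the rank-one matrix is the \emph{unique} trace minimizer on the fiber.
\end{itemize}

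\textbf{Convergence.} The paper cites Nesterov's Corollary 2.2.4 and then asserts that strict decrease forces $\bd W^{k+1}=\bd W^k=\bd W^*$. Your argument makes ``every limit point is a global minimizer'' rigorous: sufficient decrease with $\alpha\le 1/L'$, summability of $\|\bd W^{k+1}_\mathrm{PGD}-\bd W^k\|^2$, compactness of $\mathcal{W}$, and continuity of the PGD map. It also gives ``sufficiently small step-size'' a concrete meaning.

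\textbf{Gap: uniqueness of the minimizer.} You need uniqueness to pass from ``every limit point is optimal'' to $\Idq^k\to\Idq^*$. You obtain it from strict convexity of $f_0$, but the theorem assumes only that $f_0$ is convex. For example, the objective with $\gamma=0$, which Algorithm~\ref{alg:controller} allows, is not strictly convex in $(S_1,S_2)$.

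Strict convexity is not actually needed. The minimizer set of the convex problem~\eqref{eq:rank1_SDP} is convex. By Lemma~\ref{thm:rank1}, and since $W_{33}=1$, every minimizer has the form $[\bd x^\top\ 1]^\top[\bd x^\top\ 1]$. Two distinct minimizers of this form have linearly independent vectors $[\bd x^\top\ 1]^\top$, so their midpoint would be a rank-two minimizer, a contradiction. Hence the minimizer is unique. All limit points then coincide, and compactness gives convergence of the whole sequence.

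Uniqueness on the compact set $\mathcal{W}$ also makes $f-f^*$ positive definite about $\bd W^*$. Your Lyapunov remark needs this to obtain stability, not merely monotone decrease.
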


\subsection{Proof of Theorem \ref{thm:algo_convergence}} 
We begin by stating known convergence properties of projected gradient descent (PGD). Because $f$ is convex and the domain of $\bd W$ is closed and convex, Corollary 2.2.4 from~\cite{Nesterov_2018} applies assuming we start at a feasible $\bd W^k$. The corollary states: $f(\bd W^{k+1}_\mathrm{PGD}) \leq f(\bd W^k) - \epsilon_1$ for some $\epsilon_1 \geq 0$ and $\bd W^{k+1}_\mathrm{PGD}$ defined in Algorithm~\ref{alg:controller}. It further states $\epsilon_1=0 \iff \bd W^k$ minimizes $f$ over $\mathcal{W}$. Therefore, step~\ref{step:pgd} of Algorithm~\ref{alg:controller} converges to a minimizer $\bd{W}^*.$

Next, we will show the process of adjusting $\bd W^{k+1}_\mathrm{PGD}$ to a rank-1 matrix $\bd W^{k+1}$ also converges by showing $f(\bd W^{k+1}) \leq f(\bd W^{k+1}_\mathrm{PGD}) - \epsilon_2$ for some $\epsilon_2 \geq 0$. To see this, we will interpret steps~\ref{step:rank1} and~\ref{step:rank1_part2} of Algorithm~\ref{alg:controller} as the solution to another optimization problem:

In step~\ref{step:rank1}, we find the $\Idq^{k+1}$ value that achieves the same $S_1$ and $S_2$ values as $\bd{W}^{k+1}_\mathrm{PGD}$ with minimal magnitude. This is equivalent to solving
\begin{equation} \label{eq:rank1_adjustment}
    \begin{aligned}
    \bd{\hat W}^{k+1} = \arg & \min_{\bd W \in \mathcal{W}} \; \rho \Tr(\bd{W}) \\
    \mbox{s.t.} \; & \Tr(\bd M_1 \bd{W}) = \Tr(\bd M_1 \bd{W}^{k+1}_\mathrm{PGD}) \\
    & \Tr(\bd M_2 \bd{W}) = \Tr(\bd M_2 \bd{W}^{k+1}_\mathrm{PGD}), 
    \end{aligned}
\end{equation}
if we assume the minimizer $\bd{\hat W}^{k+1}$ is rank-1 so that we can write $\bd{\hat W}^{k+1}= \bd W^{k+1} = \bma \Idq^{k+1\;\top} & 1 \ebma^\top \bma \Idq^{k+1\;\top} & 1 \ebma$ and $\Tr(\bd{\hat W}^{k+1}) = ||\Idq^{k+1}||_2^2+1$. We confirm $\bd{\hat W}^{k+1}$ is indeed rank-1 by taking the partial Lagrangian relaxation of~\eqref{eq:rank1_adjustment} leaving $\bd W \in \mathcal{W}$ as a domain constraint. This gives $\bd{\hat W}^{k+1}=$
\begin{equation} \label{eq:dual}
    \begin{aligned}
     \arg \min_{\bd W \in \mathcal{W}} \; & \nu_1 \Tr(\bd M_1 \bd W) + \nu_2 \Tr(\bd M_2 \bd W) + \rho \Tr(\bd{W}) \\
    \; & -\nu_1 \Tr(\bd M_1 \bd{W}^{k+1}_\mathrm{PGD}) - \nu_2 \Tr(\bd M_2 \bd{W}^{k+1}_\mathrm{PGD}),\\
    \end{aligned}
\end{equation}
where $\nu_1$ and $\nu_2$ are unconstrained scalar dual variables.

 Applying Lemma~\ref{thm:rank1} again, we have $\bd{\hat W}^{k+1}$ is at most rank-1 $\implies \bd{\hat W}^{k+1}=\bd W^{k+1}$. In other words, the rank-1 adjustment of steps~\ref{step:rank1} and~\ref{step:rank1_part2} finds $\bd W^{k+1}$ to be the lowest-trace $\bd W$ that achieves the same $S_1$ and $S_2$ values as $\bd W^{k+1}_\mathrm{PGD}$ from step~\ref{step:pgd}. This ensures $f(\bd W^{k+1}) \leq f(\bd W^{k+1}_\mathrm{PGD}) - \epsilon_2$ where $$\epsilon_2=\rho \Tr(\bd W^{k+1}_\mathrm{PGD} - \bd W^{k+1}) \geq 0$$ is zero if and only if $\bd W^{k+1}_\mathrm{PGD}$ is already rank-1.

Finally, combining the two inequalities gives:
$$f(\bd W^{k+1}) \leq f(\bd W_\mathrm{PGD}^{k+1}) - \epsilon_2 \leq f(\bd W^k) - \epsilon_1 - \epsilon_2$$
$$\implies f(\bd W^{k+1}) \leq f(\bd W^k) - \epsilon_1 - \epsilon_2,$$
where the inequality is strict and the algorithm converges only when $\bd W^{k+1}=\bd W^k=\bd W^*,$ an optimal, rank-1 solution to~\eqref{eq:rank1_SDP}. Updating $\bd W^k$ to $\bd W^{k+1}$ in this way produces a trajectory of feasible $\Idq^k$ values that can be physically implemented on hardware. This trajectory is guaranteed to converge to an optimal $\Idq^*$ that achieves the same objective value as $\bd{W}^*$ with minimum current magnitude.

\section{Simulation Studies} \label{sec:simulation}
In this section, we test the optimal controller (OC) of Algorithm~\ref{alg:controller} on the single-inverter-infinite-bus network of Figure~\ref{fig:circuit} under different disturbances to confirm theoretical results and to compare performance to a current-limited droop controller. We then demonstrate the OC viability in a larger network of inverters. 

Simulations are conducted in \texttt{Python}. The \texttt{CLARABEL} solver in \texttt{CVXPY} is used to compute the projection in step~\ref{step:pgd} of Algorithm~\ref{alg:controller}~\cite{diamond2016cvxpy}. System parameters for the single-inverter-infinite-bus network are provided in Table~\ref{tab:system_params}. In per unit, $I_\mathrm{max}$ of the inverter $=1$ pu. The code used to produce the results is available.\footnote{\url{https://github.com/lpstreitmatter/Optimal-Safe-Converter-Control.git}} 

\begin{table}
    \centering
    \caption{Power System Parameters}
    \begin{tabular}{c|c|c|c}
    \hline
        Parameter & Value & Parameter & Value \\ 
         \hline
         $S_{base}$ & $1200$ VA & $V_{base}$ & $120$ V \\
         $I_{base}$ & $3.33$ A & $\omega_{base}$ & $2 \pi 60$ rad/s \\
         $R_f$ & $0.011$ pu & $R_g$ & $0.025$ pu\\
         $L_f$ & $0.016$ pu & $L_g$ & $0.021$ pu\\
         $C_f$ & $0.014$ pu & $E$ & $1$ pu\\
         \hline
    \end{tabular}
    \label{tab:system_params}
\end{table}

\subsection{Step Change in Inverter Setpoint}
In this first scenario, we simulate the inverter's response to a new, infeasible $(\overline{P},\overline{V^2})$ setpoint. We assume fixed grid voltage in this scenario, so the optimal controller maintains an accurate estimate of $\Edq$ throughout. One practical realization of this scenario on the load-side is to regulate the voltage level of grid-connected data centers that seek to track volatile spikes in power demand without disrupting the rest of the grid. New standards requiring this behavior from large loads may soon become a reality~\cite{quint2025practical}. The proposed OC approach can provide an optimal power tracking algorithm for the data center while reducing disruptive voltage fluctuations for the rest of the grid.

The simulation setup is as follows. The inverter operates at an initial current of $\Idq^0=[0.75 \; 0.3]^\top$ (pu). The simulation timestep $\Delta t=0.002$s, and at $t_0=0.05$s, the $(\overline{P},\overline{V^2})$ setpoint changes from $(0.77,1.03)$ (which corresponds to $\Idq^0$) to $(1,1)$. We run Algorithm~\ref{alg:controller} with $\rho=0.001$, $\gamma=1$, and $\alpha=1$. In per-unit, $\gamma=1$ biases the optimal $(P^*,{V^2}^*)$ toward $\overline P$. We also note that $\alpha$ must be carefully tuned depending on $\Delta t$ and network parameters. If $\alpha$ is too large, it invalidates the small step-size assumption and prevents convergence. If $\alpha$ is too small, the inverter converges too slowly to optimality. 

For our benchmark GFM control comparison, we use a simplified model of the droop controller adapted from \cite{9254645}. The droop controller updates the frequency reference according to $$\omega^{k+1} = \omega_\mathrm{nom} - m_p (P_f^k - \overline{P}),$$ and the voltage magnitude reference according to $$V_\mathrm{dq}^{k+1}=V_\mathrm{nom} - m_q (Q_f^k - \overline{Q})$$ for $PQ$ droop or $$V_\mathrm{dq}^{k+1}=\sqrt{V_\mathrm{nom}^2 - m_{V^2} (V_f^{2,k} - \overline{\Vm})}$$ for $P\Vm$ droop. Droop parameter $m_p$ is set to 0.5 Hz droop and parameters $m_q$ and $m_{V^2}$ are both set to $5\%$ voltage droop. $P_f, Q_f,$ and $V_f^2$ are filtered values of $P, Q,$ and $\Vm$ to avoid algebraic loops. At each time step, the updated voltage magnitude and frequency references produce a new current $\Idq^{k+1}$ that passes through a saturator to ensure device limits are not violated:
$$\Idq^{k+1}=\mathrm{sat}(\Idq^{k+1})=\frac{I_\mathrm{max}}{\max(I_\mathrm{max},||\Idq^{k+1}||)} \cdot \Idq^{k+1}.$$
This final, saturated current determines the droop-controlled inverter's actual voltage and power outputs.

Simulation results are shown in Figure~\ref{fig:compare_pv2_setpoint_change}. Both the optimal controller (OC) and droop controller (DR) respond safely to the change in $(\overline{P},\overline{V^2})$ setpoint, maximizing inverter current magnitude without exceeding $I_\mathrm{max}$. The optimal controller quickly converges to an optimal and safe operating point of $(0.99,1.05)$ and remains there. However, the droop controller's current limiter introduces a mismatch between the voltage and current reference signals that causes the device to oscillate around the boundary of the feasible region, failing to settle at a new operating point. If the simulation time was extended, the droop controller would continue its trajectory around the $(P,V^2)$ region.

\begin{figure}[ht]
\centering
\includegraphics[width=0.75\linewidth]{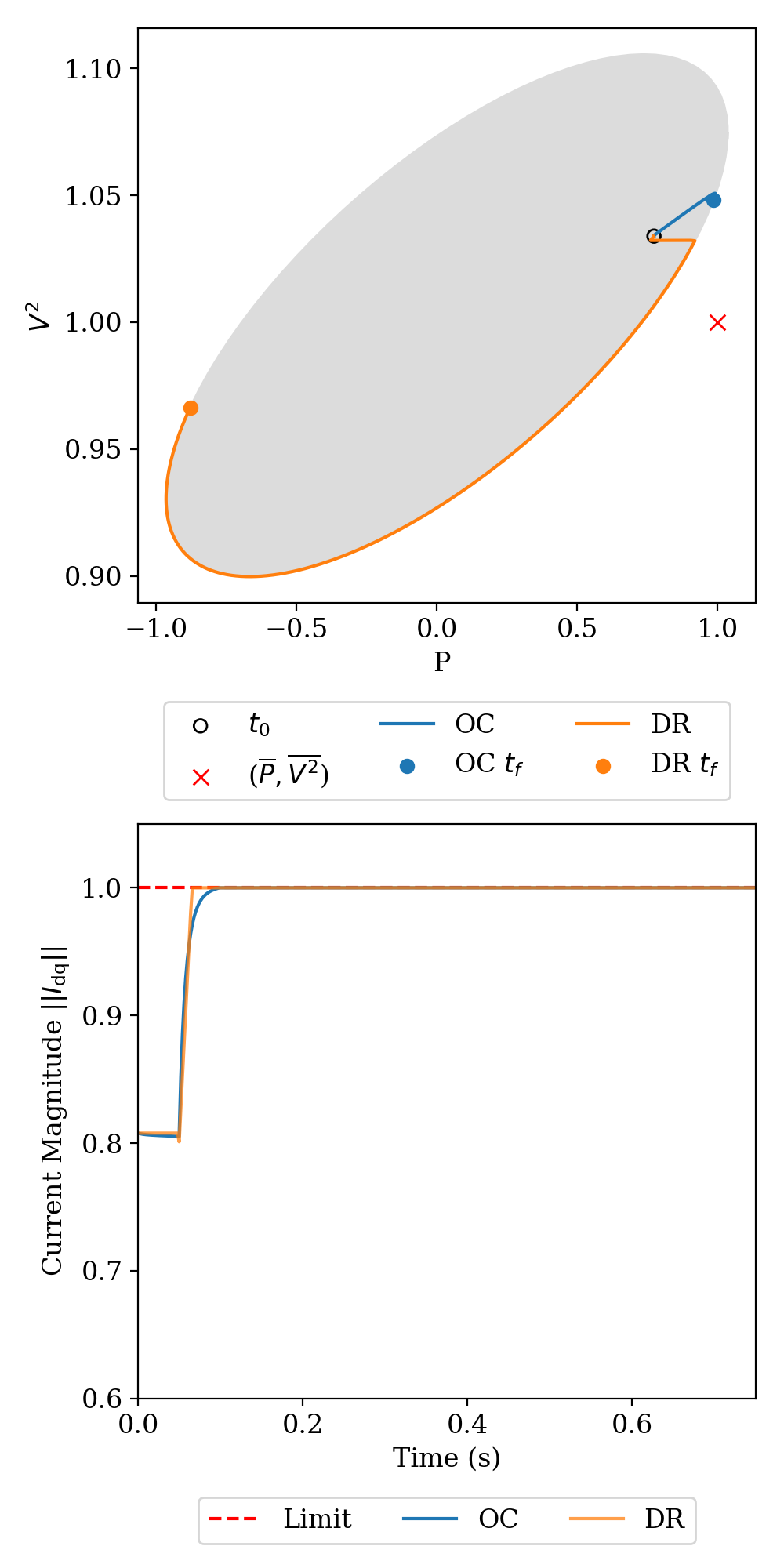}
\caption{Comparisons between the optimal controller (blue) and droop (orange) inverters' $(P,V^2)$ trajectories (top) and current magnitude responses over time (bottom) following a setpoint change at $t_0=0.05s$ are shown above. While neither controller violates current magnitude constraints, the droop controller fails to converge.}
\label{fig:compare_pv2_setpoint_change}
\end{figure}

Now we look at the inner loop dynamics. One way to interpret our setup is that the inner loop control structure achieves the algebraic relationship between $\Idq$ and $\Vdq$ in~\eqref{eq:Vdq_qss}, and our controller operates on top of it. Alternatively, our controller can be achieved via single-loops, similarly to the droop controller in~\cite{du2019comparative}. 


For the rest of the simulations, we validate our controller in multi-loop setting. We simulated the OC response in the same scenario but with inner voltage and current control loops~\cite{unifi_tutorial,rokrok2022}. Results in Figure~\ref{fig:inner_outer} show that the OC with and without inner-loop controls follow roughly the same trajectories and converge at similar times. This behavior requires sufficient timescale separation of the outer OC loop and the inner voltage and current control loops, as well as careful tuning of $\alpha$ relative to these timescales. Because we observe comparable behavior, the remaining simulations do not include inner loop controls.

\begin{figure}[ht]
\centering
\includegraphics[width=0.75\linewidth]{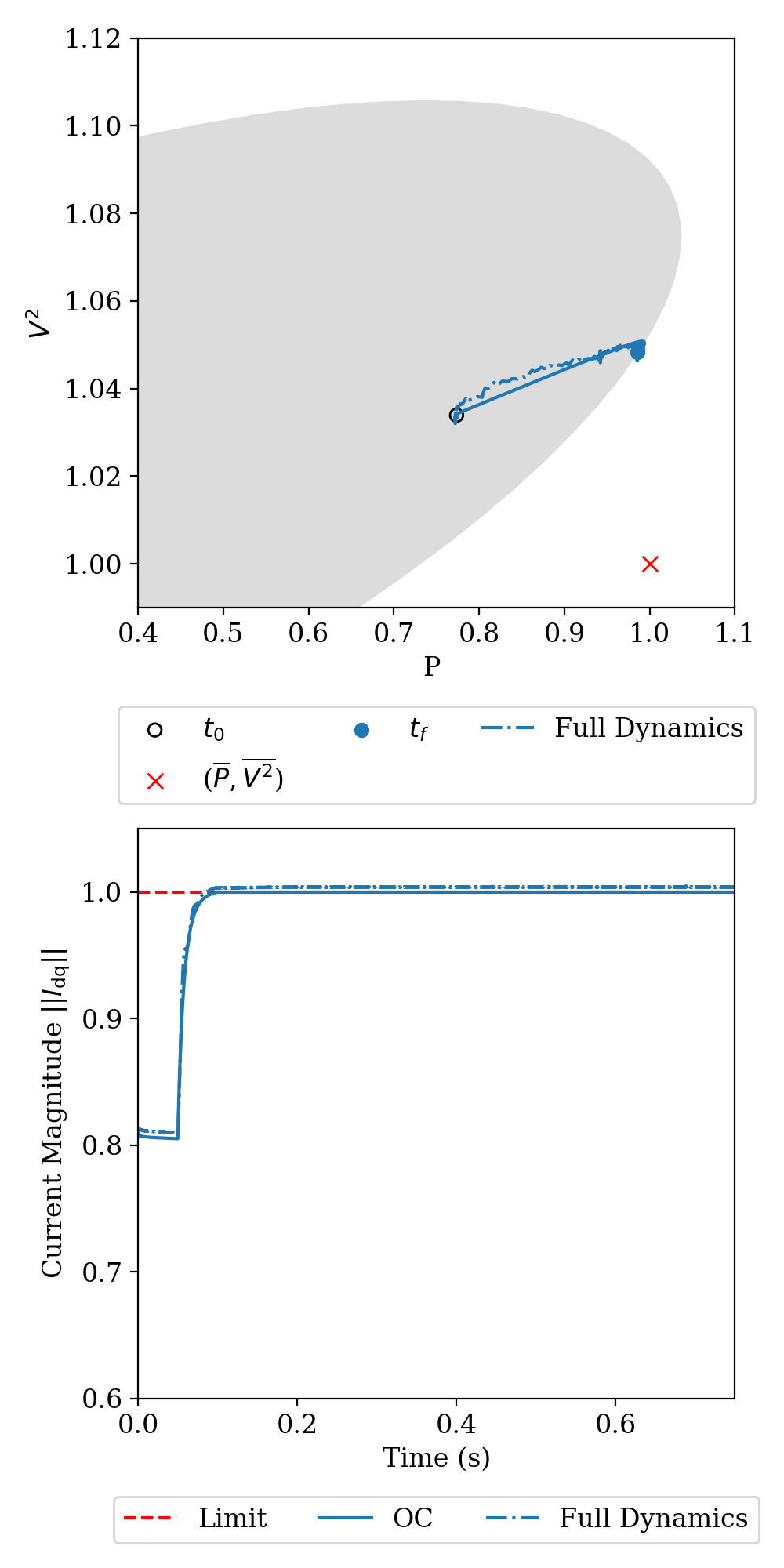}
\caption{The dashed blue line is the OC trajectory considering inner voltage and current loop dynamics compared to the OC trajectory of Figure~\ref{fig:compare_pv2_setpoint_change} shown again in solid blue for reference. Inner loop dynamics introduce noise but maintain overall trajectory and convergence behavior, validating the assumption that these dynamics can be neglected with sufficient timescale separation.}
\label{fig:inner_outer}
\end{figure}

\subsection{Step Change in Grid Voltage}
In the grid disturbance scenario, we simulate the converter's response tracking a constant $(\overline{P},\overline{V^2})$ setpoint following a 17$\%$ drop in grid-voltage magnitude $||\Edq||$ at $t_0=0.05s$. Because of the change in $\Edq$, the OC requires a few cycles for its $\Edq$ measurements to converge to the new value. We model the controller’s $\Edq$ measurement noise by adding independent zero-mean Gaussian noise to the $\mathrm{d}$ and $\mathrm{q}$ components, with identical decaying variance and zero covariance. The initial variance is $0.1 ||\Edq||$.

Results are provided in Figure~\ref{fig:compare_pv2_disturbance}. We first remark how the inverter's feasible operating region depends on the grid voltage: after $||\Edq||$ decreases, the range of feasible $\Vm$ values shifts downward and the overall $(P,V^2)$ region shrinks (compare the original grey region to the yellow region post-disturbance). As a result, the $(\overline{P},\overline{V^2})$ setpoint becomes infeasible post-disturbance; both the squared voltage magnitude and power setpoints lie outside the new feasible operating region. While both of the OC and droop controllers maintain safe current levels throughout the disturbance, the droop controller destabilizes and oscillates around the boundary of the feasible region. Meanwhile, the optimal controller converges to the closest feasible setpoint despite $\Edq$ measurement noise.

A practical example of this scenario (particularly for $(P,Q)$ tracking) is for homes with grid-connected rooftop solar. A home's installed solar capacity is often higher than its installed inverter capacity, meaning the home may request an unsafe amount of power if its load exceeds inverter ratings at times of peak solar generation. Currently, certain operators conservatively scale the apparent power setpoint according to the inverter's rated \emph{power}, not current, \emph{e.g.} $S=\sqrt{P^2 + Q^2} \leq 0.9 S_\mathrm{rated}$. This assumes the feasible operating region of the inverter is independent of the grid voltage, which we observe is not the case. The OC overcomes this limitation and maximizes the load served by rooftop generation without jeopardizing installed hardware.

\begin{figure}[ht]
    \centering
    \includegraphics[width=0.75\linewidth]{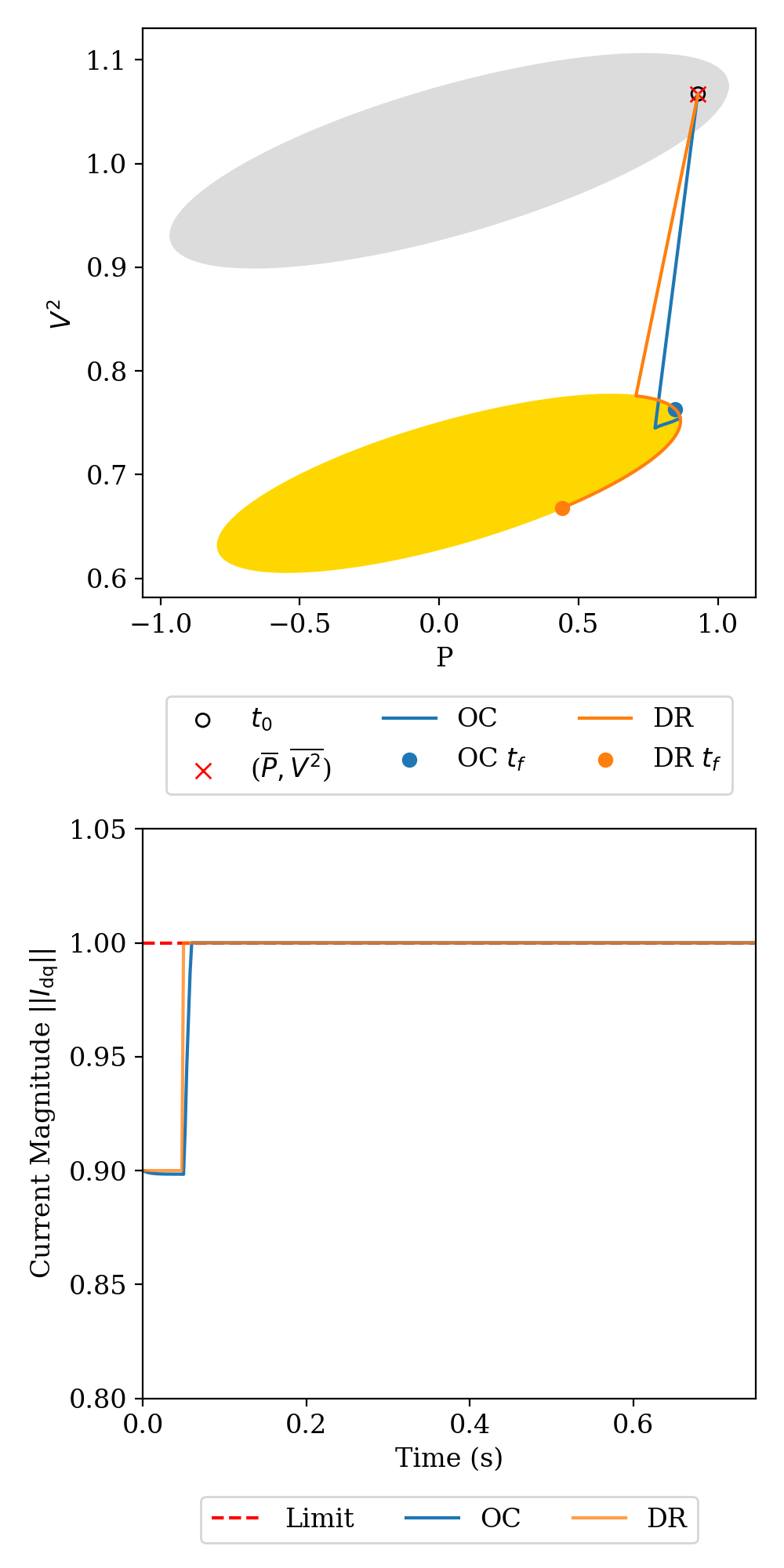}
    \caption{Comparisons between the optimal controller (blue) and droop (orange) inverters' $(P,V^2)$ trajectories (top) and current magnitude responses over time (bottom) following a step change in grid voltage at $t_0=0.05s$ are shown above. The feasible operating region $\mathcal{S}$ of the inverter \emph{before} the disturbance is in grey and \emph{after} the disturbance is in yellow. The $(\overline{P},\overline{V^2})$ setpoint becomes infeasible after the disturbance, destabilizing the saturated droop controller while the optimal controller safely settles at the nearest feasible point.}
    \label{fig:compare_pv2_disturbance}
    \end{figure}


\subsection{Multi-Inverter Network Case Study}
This final scenario demonstrates the viability of the optimal controller in a network of multiple inverters. We consider the IEEE 14-bus network of Figure~\ref{fig:ieee14_modified} modified so that buses 2, 3, 6, and 8 each have an OC inverter instead of synchronous machines~\cite{ICSEG_IEEE14Bus}. All network impedances are maintained from the original system, and each inverter connects to its bus through an $RL$ filter with $0.01 \mathrm{pu}$ resistance and $0.1 \mathrm{pu}$ reactance. We treat the slack bus (bus 1) as the grid voltage, $\Edq$ with fixed voltage magnitude and frequency. We initialize the system at a valid power flow solution then introduce a simultaneous increase in the active power setpoint of each inverter to an infeasible level $\overline{P} = 1.1 \; \mathrm{pu}$ at $t=0.05s$ while keeping $\overline{V^2}$ fixed. We use $\Delta t = 0.01s$ and set $\rho=0.001$ and $\alpha=2$. 

Each inverter's current magnitude and active power output over time is shown in Figure~\ref{fig:network_current_lowdt}. Every inverter rapidly converges to its maximum current magnitude ($1 \; \mathrm{pu}$) to produce the maximum feasible active power output given network parameters. We plot the voltage magnitude and frequency at each bus in Figure~\ref{fig:network_voltage_lowdt}. Since we do not explicitly control the frequency deviation of each inverter, we calculate it retroactively as the time derivative of $\arg(\Vdq(t))$. We observe the largest transient frequency deviation at Bus 8 of 0.5 Hz. This value is inversely proportional to $\Delta t$, so increasing $\Delta t$ to $0.05s$ reduces the peak in frequency deviation to 0.1 Hz (Figure~\ref{fig:network_voltage_highdt}). However, the tradeoff of reduced OC loop frequency, is slower inverter response times evidenced by the slower convergence of voltage magnitudes, as well as of current magnitudes and active power outputs in Figure~\ref{fig:network_current_highdt}. 

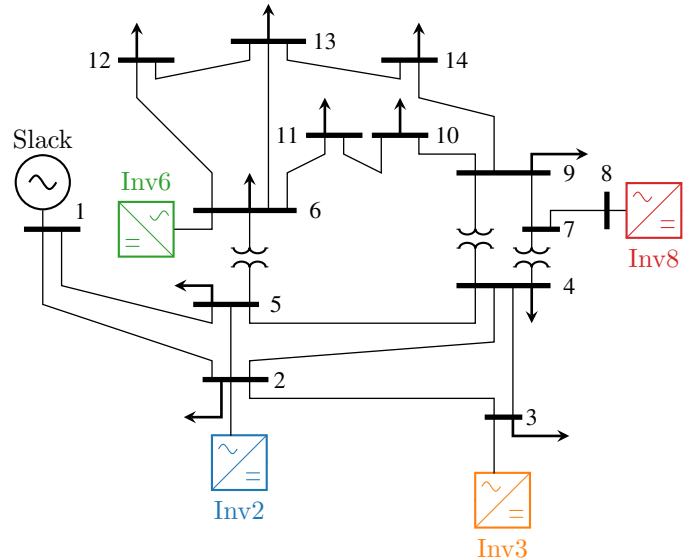
\begin{figure}[ht]
    \tikzset{flow/.style={line width=1pt, -stealth}}
\definecolor{inv2color}{RGB}{31,119,180}
\definecolor{inv3color}{RGB}{255,127,15}
\definecolor{inv6color}{RGB}{43,160,43}
\definecolor{inv8color}{RGB}{214,39,40}
\tikzset{
  smallinductor/.style={
    american inductor,
    /tikz/circuitikz/bipoles/americaninductor/coils=2,
    /tikz/circuitikz/bipoles/length=0.5cm
  }
}
\tikzset{buslabel/.style={font=\small}}
\begin{circuitikz}
    \node[buslabel] at (1.75, 4.5) {1};
	\draw[line width=2pt] (1, 4.25) -- (1.75, 4.25);
    
    \node[buslabel] at (4.4, 2.25) {2};
    \draw[line width=2pt] (3.375, 2.25) -- (4.25, 2.25);
    
    \node[buslabel] at (7.75, 1.75) {3};
    \draw[line width=2pt] (7.125, 1.75) -- (7.625, 1.75);
    
    \node[buslabel] at (8.25, 3.5) {4};
    \draw[line width=2pt] (6.75, 3.5) -- (8, 3.5);
    
    \node[buslabel] at (4.35, 3.25) {5};
    \draw[line width=2pt] (3.25, 3.25) -- (4.125, 3.25);
    
    \node[buslabel] at (4.875, 4.5) {6};
    \draw[line width=2pt] (3.25, 4.5) -| (4.625, 4.5);
    
    \node[buslabel] at (8.25, 4.25) {7};
    \draw[line width=2pt] (7.625, 4.25) -- (8.125, 4.25);
    
    \node[buslabel] at (8.75, 5) {8};
	\draw[line width=2pt] (8.75, 4.25) -- (8.75, 4.75);
    
    \node[buslabel] at (8.25, 5) {9};
    \draw[line width=2pt] (6.75, 5) -- (8, 5);
    
    \node[buslabel] at (6.625, 5.5) {10};
    \draw[line width=2pt] (5.625, 5.5) -| (6.375, 5.5);
    
    \node[buslabel] at (4.5, 5.5) {11};
    \draw[line width=2pt] (4.75, 5.5) -| (5.5, 5.5);
    
    \node[buslabel] at (2, 6.5) {12};
    \draw[line width=2pt] (2.25, 6.5) -- (3, 6.5);
    
    \node[buslabel] at (5, 6.75) {13};
    \draw[line width=2pt] (3.75, 6.75) -| (4.75, 6.75);
    
    \node[buslabel] at (6.75, 6.5) {14};
    \draw[line width=2pt] (5.75, 6.5) -| (6.5, 6.5);

	\draw[flow] (2.5, 6.5) -- (2.5, 7);
	\draw[flow] (4.25, 6.75) -- (4.25, 7.25);
	\draw[flow] (6.25, 6.5) -- (6.25, 7);
	\draw[flow] (5, 5.5) -- (5, 6);
	\draw[flow] (6, 5.5) -- (6, 6);
	\draw[flow] (7.75, 3.5) -- (7.75, 3);
	\draw[flow] (3.5, 3.25) -- (3.5, 3.5) -- (3, 3.5);
	\draw[flow] (3.625, 2.25) -- (3.625, 1.75) -- (3.125, 1.75);
	\draw[flow] (7.5, 1.75) -- (7.5, 1.5) -- (8.25, 1.5);
	\draw[flow] (7.75, 5) -- (7.75, 5.25) -- (8.5, 5.25);

	\draw (3.75, 3.75) to[smallinductor] (4.25, 3.75);
	\draw (3.75, 4) to[smallinductor,mirror] (4.25, 4);
    
	\draw[line width=0.5pt] (1.5, 4.25) -- (1.5, 3.5) -- (3.5, 3) -- (3.5, 3.25);
	\draw[line width=0.5pt] (3.75, 3.25) -- (3.75, 2.25);
	\draw[line width=0.5pt] (4, 3.25) -- (4, 3) -| (7, 3.25) -- (7, 3.5);
	\draw[line width=0.5pt] (4, 3.25) -- (4, 3.75);
	\draw[line width=0.5pt] (4, 4) -- (4, 4.5);
    
	\draw (6.75, 4) to[smallinductor] (7.25, 4);
	\draw (6.75, 4.25) to[smallinductor,mirror] (7.25, 4.25);
    
	\draw[line width=0.5pt] (7, 3.5) -- (7, 4);
	\draw[line width=0.5pt] (7, 4.25) -- (7, 5);
    
    \draw (7.5, 3.75) to[smallinductor] (8, 3.75);
	\draw (7.5, 4) to[smallinductor,mirror] (8, 4);
    
	\draw[line width=0.5pt] (7.75, 3.5) -- (7.75, 3.75);
	\draw[line width=0.5pt] (7.75, 4) -- (7.75, 4.25);
	\draw[line width=0.5pt] (7.75, 4.25) -- (7.75, 5);
	\draw[line width=0.5pt] (8, 4.25) -- (8, 4.5) -- (8.75, 4.5);
	\draw[line width=0.5pt] (4, 2.25) -- (4, 2.5) -- (7.25, 2.75) -- (7.25, 3.5);
	\draw[line width=0.5pt] (7.5, 3.5) -- (7.5, 1.75);
	\draw[line width=0.5pt] (4, 2.25) -- (4, 2) -- (7.25, 2) -- (7.25, 1.75);
	\draw[line width=0.5pt] (1.25, 4.25) -- (1.25, 3.25) -- (3.5, 2.5) -- (3.5, 2.25);
    
    \draw (1.25,5.25) to[sinusoidal voltage source, 
      /tikz/circuitikz/bipoles/length=1.2cm] 
      (1.25,4.5) node[above=20pt] {$\mathrm{Slack}$};
	\draw[line width=0.5pt] (1.25, 4.5) -- (1.25, 4.25);
    
	\draw (4.238, 1.145) to[sdcac, color=inv2color, /tikz/circuitikz/bipoles/length=1.04cm, l={\textcolor{inv2color}{$\mathrm{Inv2}$}}] (3.488, 1.145);
    
	\draw[line width=0.5pt] (3.75, 1.5) -- (3.75, 2.25);
    
	\draw (7.738, 0.645) to[sdcac, color=inv3color, /tikz/circuitikz/bipoles/length=1.04cm, l={\textcolor{inv3color}{$\mathrm{Inv3}$}}] (6.988, 0.645);
    
	\draw[line width=0.5pt] (7.25, 1) -- (7.25, 1.75);
    
	\draw (2.25, 4.25) to[sdcac, mirror, color=inv6color, /tikz/circuitikz/bipoles/length=1.04cm, l={\textcolor{inv6color}{$\mathrm{Inv6}$}}] (3, 4.25);
    
	\draw[line width=0.5pt] (3, 4.25) -- (3.5, 4.25) -- (3.5, 4.5);
    
	\draw (9.75, 4.5) to[sdcac, color=inv8color, /tikz/circuitikz/bipoles/length=1.04cm, l={\textcolor{inv8color}{$\mathrm{Inv8}$}}] (9, 4.5);
    
	\draw[line width=0.5pt] (8.75, 4.5) -- (9, 4.5);
	\draw[line width=0.5pt] (2.5, 6.5) -- (2.5, 6) -- (3.5, 5) -- (3.5, 4.5);
	\draw[line width=0.5pt] (2.75, 6.5) -- (2.75, 6.25) -- (4, 6.5) -- (4, 6.75);
	\draw[line width=0.5pt] (4.25, 6.75) -- (4.25, 4.5);
	\draw[line width=0.5pt] (4.5, 6.75) -- (4.5, 6.5) -- (6, 6.25) -- (6, 6.5);
	\draw[line width=0.5pt] (6.25, 6.5) -- (6.25, 6) -- (7.25, 5.75) -- (7.25, 5);
	\draw[line width=0.5pt] (6.25, 5.5) -- (6.25, 5.25) -- (7, 5.25) -- (7, 5);
	\draw[line width=0.5pt] (5, 5.5) -- (5, 5.25) -- (4.5, 5) -- (4.5, 4.5);
	\draw[flow] (4, 4.5) -- (4, 5);
	\draw[line width=0.5pt] (5.25, 5.5) -- (5.25, 5.25) -- (5.75, 5) -- (5.75, 5.5);  
\end{circuitikz}
    \caption{Modified IEEE 14-bus system with bus~1 as the slack bus and inverters at buses 2, 3, 6, and 8.}
    \label{fig:ieee14_modified}
\end{figure}

The fact that each converter converges to an equilibrium in the multi-converter network (regardless of convergence time which depends on control parameters) is an interesting empirical extension of Theorem~\ref{thm:algo_convergence} which guarantees convergence only for single-converter networks. We do not prove or assume the joint convexity of the feasible operating regions of multiple converters yet observe asymptotically stable OC behavior. An important next step will be to investigate the convexity of multiple-converter network configurations to determine any convergence guarantees. 

\begin{figure}[ht]
\centering
\includegraphics[width=\linewidth]{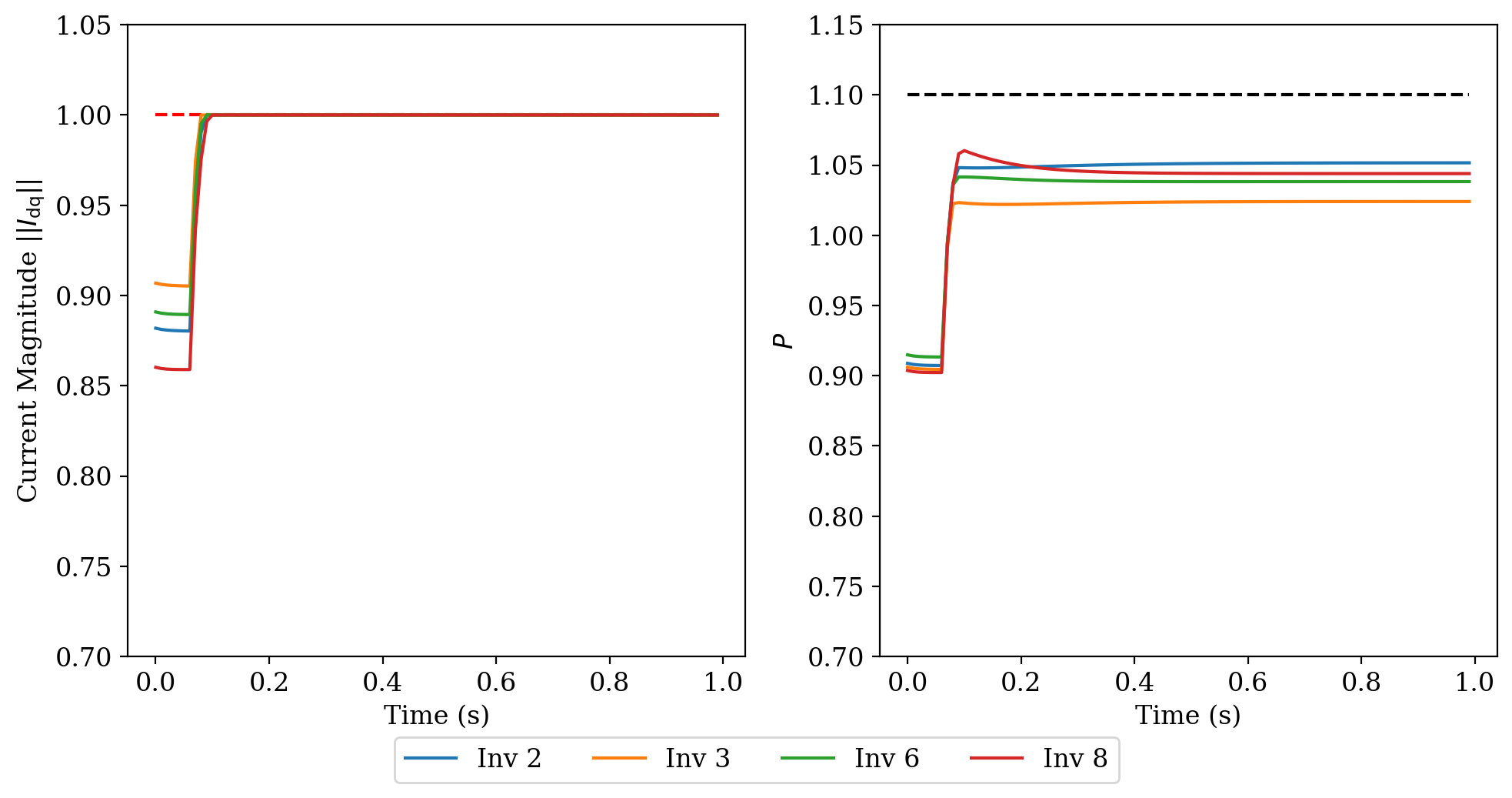}
\caption{Current magnitude (left) and active power output (right) are shown for each inverter. The dashed red line on the left plot marks the maximum safe current magnitude of each inverter ($1 \; \mathrm{pu}$) while the dashed black line on the right plot indicates the infeasible $\overline P$ introduced at $t=0.05s$. Each inverter safely converges to the maximum $1 \mathrm{pu}$ current.}
\label{fig:network_current_lowdt}
\end{figure}

\begin{figure}[ht]
\centering
\includegraphics[width=\linewidth]{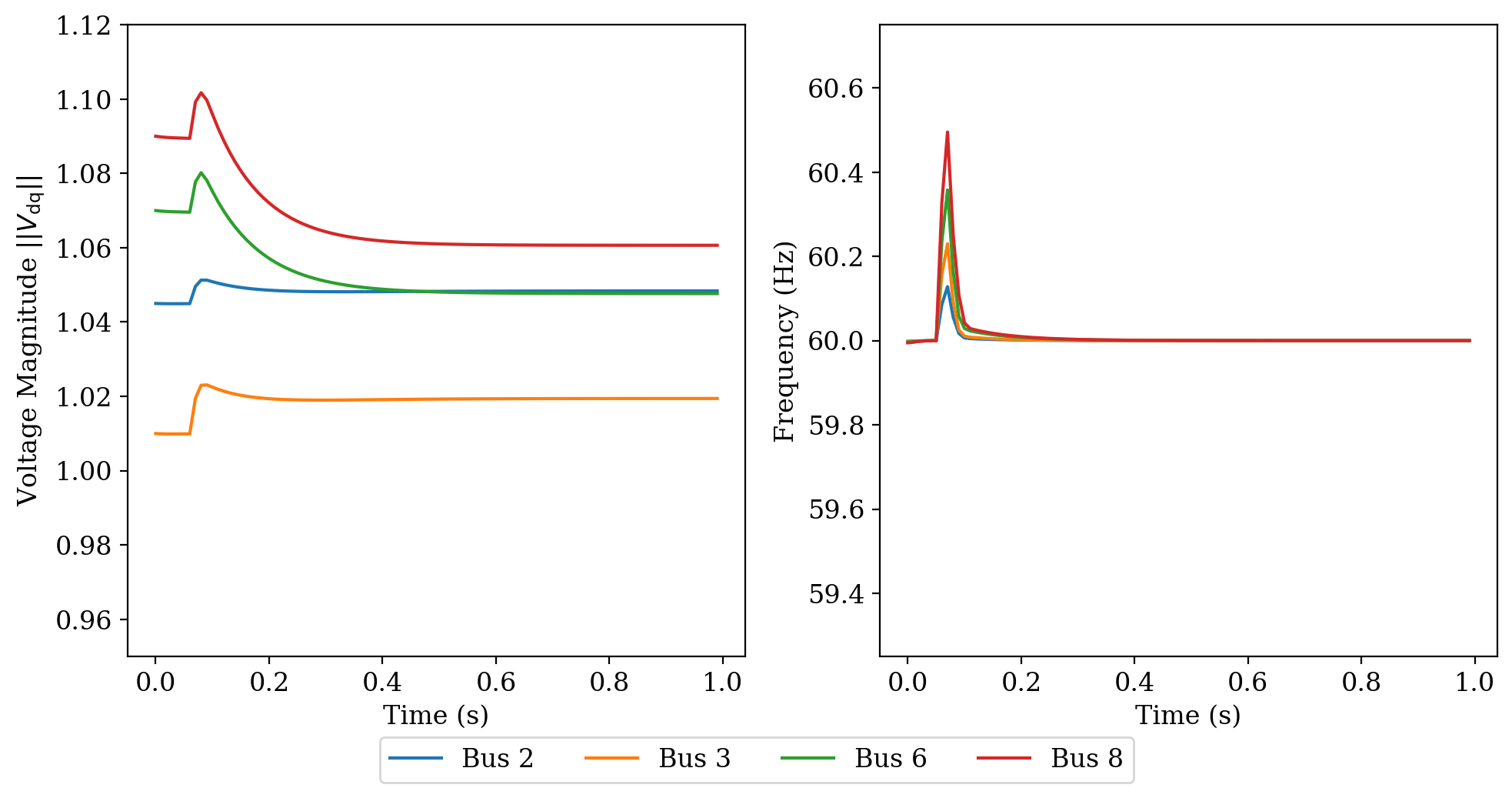}
\caption{The resulting voltage magnitude (left) and frequency (right) are shown for each of the four inverter buses. The magnitude of voltage frequency deviation is proportional to control loop frequency.}
\label{fig:network_voltage_lowdt}
\end{figure}

\begin{figure}[ht]
\centering
\includegraphics[width=\linewidth]{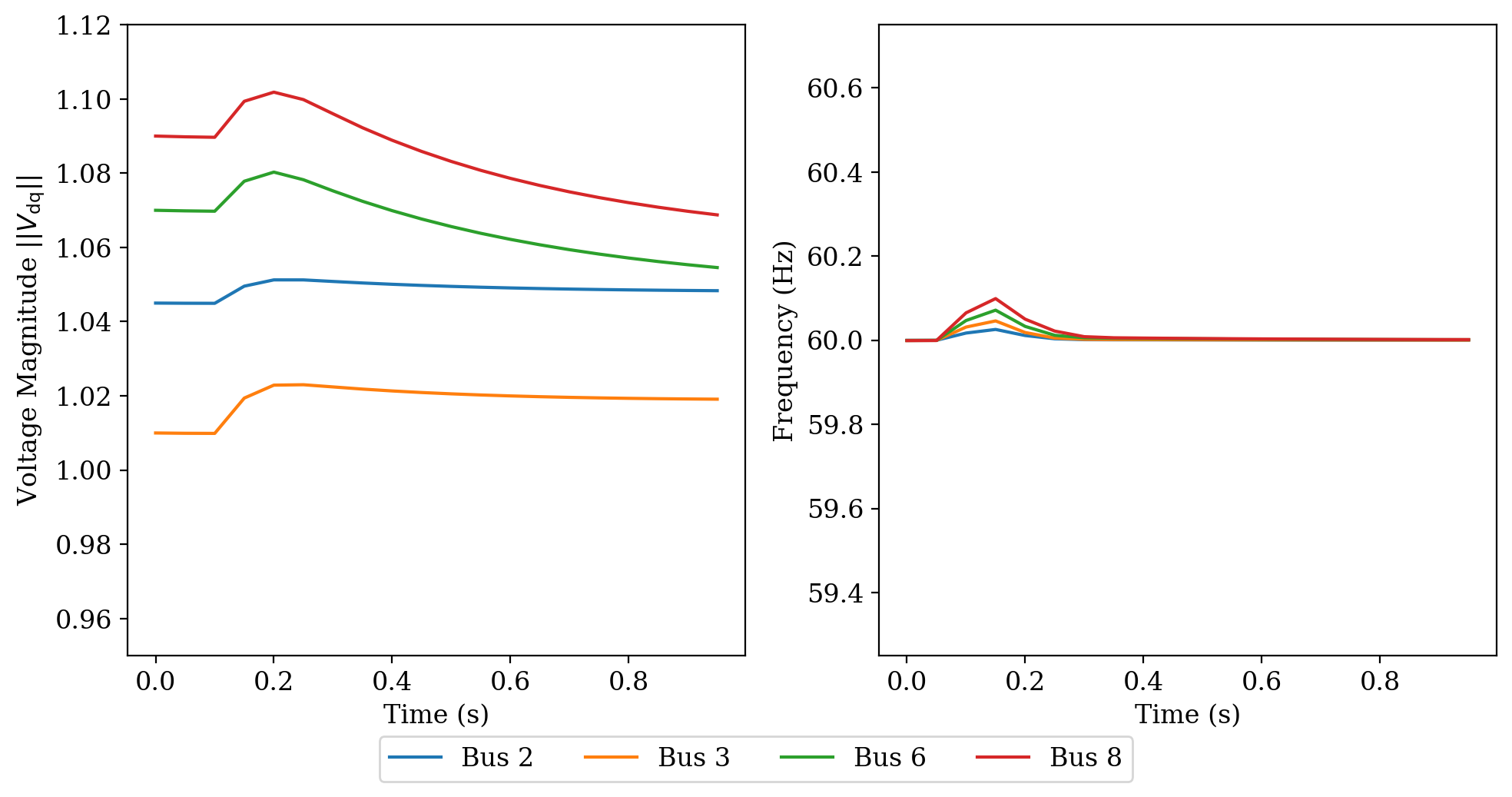}
\caption{Results are shown for the same simulation as in Figure~\ref{fig:network_voltage_lowdt} but with $\Delta t=0.05s$ instead of $0.01s$. The slower-acting control loop helps reduce the maximum frequency deviation at the cost of slower convergence times.}
\label{fig:network_voltage_highdt}
\end{figure}

\begin{figure}[ht]
\centering
\includegraphics[width=\linewidth]{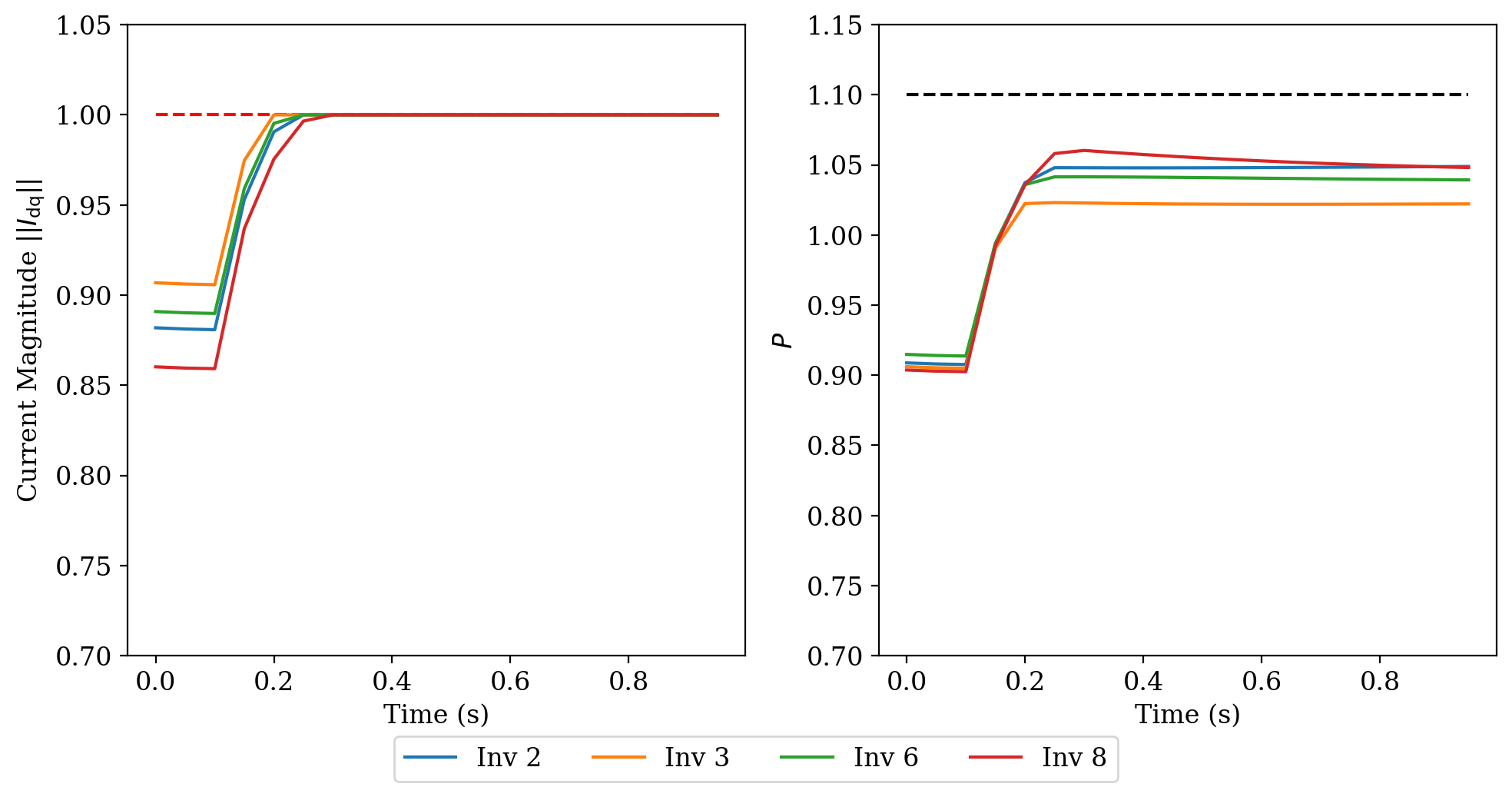}
\caption{Results are shown for the same simulation as in Figure~\ref{fig:network_current_lowdt} but with $\Delta t=0.05s$ instead of $0.01s$. Consequently, the inverter current and power outputs take longer to converge.}
\label{fig:network_current_highdt}
\end{figure}

\section{Conclusion} \label{sec:conclusion}
In this paper, we extended the proof from \cite{streitmatter2025geometry} that the feasible operating region of a grid-connected converter with an $RL$ filter is convex to grid-connected converters with generic filter topologies and parameters. While the previous work discusses how to use convexity to \emph{find} safe and optimal operating points, this paper develops a dynamic, real-time control strategy to safely \emph{achieve} optimal operating points with convergence guarantees. This approach is validated in simulations. Future work includes exploring the convexity of networked systems to design controllers with convergence guarantees even in multi-converter networks. It is also of interest to test the controller using more detailed simulation tools and on hardware to explore inner loop dynamics and bandwidth limitations.

\bibliographystyle{IEEEtran}
\bibliography{Reference}

\appendix[Proof of Lemma~\ref{thm:rank1}] \label{app:proof_rank1}
\begin{proof}
    To prove Lemma~\ref{thm:rank1}, we will take the Lagrangian of~\eqref{eq:rank1_SDP}, differentiate it with respect to $\bd{W}$, and apply the stationarity condition to find the dual variable $\bd Z^* \succeq 0$ at optimality. We will show there are three possible cases for the structure of $\bd Z^*$, and through a case-by-case analysis of each prove that the rank of $\bd Z^*$ will always be at least 2. Because complementary slackness constrains $\Tr(\bd Z^* \bd W^*)=0$ at optimality, this allows us to conclude $\bd W^*$ is always at most rank-1.

    The Lagrangian of~\eqref{eq:rank1_SDP} is given by:
    \begin{equation}\label{eq:L}
    \begin{split}
        \mathcal{L}(&\bd W, \lambda, \sigma, \bd Z) = \\
        & f_1(g_1(\bd W))+f_2(g_2(\bd W)) + \rho \Tr(\bd W) \\
        & + \lambda(W_{11} + W_{22} - I_\mathrm{max}^2) + \sigma(W_{33}-1) - \langle \bd Z, \bd W \rangle,
    \end{split}
    \end{equation}
    where $g_i (\bd W)=\Tr(\bd{M_i W})$, $\lambda \geq 0, \bd Z \succeq 0$, and $\langle \bd Z^*, \bd W^* \rangle = 0$ at optimality. The dual variables are $\lambda, \sigma,$ and $\bd Z$.

    Applying the chain rule to differentiate~\eqref{eq:L} with respect to $\bd W$ gives $\nabla_{\bd W} \mathcal{L} =$
    $$\frac{\partial g_1}{\partial \bd W}^\top \frac{\partial f_1}{\partial g_1} + \frac{\partial g_2}{\partial \bd W}^\top \frac{\partial f_2}{\partial g_2} + \bma \rho + \lambda & 0 & 0 \\ 0 & \rho + \lambda & 0 \\ 0 & 0 & \rho + \sigma \ebma - \bd Z,$$
    where $\frac{\partial f_1}{\partial g_1}, \frac{\partial f_2}{\partial g_2}$ are scalars we denote as $f_1',f_2'$ (\textit{e.g.,} the deviation from desired setpoints $\overline{S}_1, \overline{S}_2$) and $\frac{\partial g_i}{\partial \bd W} = \bd M_i$. At optimality, $\nabla_{\bd W} \mathcal{L} = 0$ and solving for $\bd Z^*$ gives:
    $$\bd Z^* = f_1' \bd M_1 + f_2' \bd M_2 + \bma \rho + \lambda & 0 & 0 \\ 0 & \rho + \lambda & 0 \\ 0 & 0 & \rho + \sigma \ebma.$$
    
    To better analyze the structure of $\bd Z^*$, we can expand $\bd M_1$ and $\bd M_2$ according to~\eqref{eq:M12}, first noting that vectors $\bd a$ and $\bd b$ are scalar multiples of two of $\left\{\Edq, \bd J \Edq, \bd Z_{eq}^\top \Edq \right\}$. Vectors $\bd a$ and $\bd b$ in~\eqref{eq:M12} can thus be written as $\bd C^\top \Edq$ and $\bd D^\top \Edq$ for matrices $\bd C, \bd D \in \left\{\bd I_2, \bd J^\top, \bd Z_{eq} \right\}$. (This simplifies analysis of Case 3 in particular.) With these terms as defined,
    \begin{equation}
    \begin{split}
    \bd Z^* = \bma (\alpha f_1' + \beta f_2') \bd I_2 & \frac12 (f_1' \bd C + f_2' \bd D)^\top \Edq \\ \frac12 \Edq^\top (f_1' \bd C + f_2' \bd D) & f_1' \zeta_1 + f_2' \zeta_2 \ebma + \\
    \bma \rho + \lambda & 0 & 0 \\ 0 & \rho + \lambda & 0 \\ 0 & 0 & \rho + \sigma \ebma.
    \end{split}
    \end{equation} 

    Recognizing that the rank of $\bd Z^*$ depends on the values of $f_1'$ and $f_2'$ (and assuming $\Edq \neq \bd 0$), we consider each of the three possible situations that may arise.

    \begin{description}
        \item[Case 1: $f_1'=0,f_2'=0$.] This occurs when the constraints on $\bd W$ are not binding (\textit{e.g.,} setpoints $\overline{S}_1$ and $\overline{S}_2$ can both be achieved exactly).
        
        In this case, $\bd Z^* = \bma \rho + \lambda & 0 & 0 \\ 0 & \rho + \lambda & 0 \\ 0 & 0 & \rho + \sigma \ebma,$ where we know $\lambda \geq 0, \rho > 0.$ Hence, $\bd Z^*$ is a diagonal matrix that will always have at least rank 2 for strictly positive $\rho$. Without $\rho$, this guarantee would not hold.

        \item[Case 2: $f_1'\neq0,f_2'=0$.] This also applies to $f_1'=0,f_2' \neq 0$ and represents the case when the unconstrained optimal value of either $f_1$ or $f_2$ is not attainable, but the other is. 
        
        In this case, $\bd Z^* = \bma (\alpha f_1' + \lambda + \rho) \bd I_2 & \frac12 f_1' \bd C^\top \Edq \\ \frac12 f_1' \Edq^\top \bd C & f_1' \zeta_1 + \sigma + \rho \ebma$ which has the form $\bma \times & 0 & x \\ 0 & \times & y \\ x & y & * \ebma$ where $\times$ denotes non-zero entries, $x,y$ are arbitrary scalars, and $*$ is an arbitrary scalar that does not impact conclusions. This divides into two sub-cases based on our assumptions: one of $x$ or $y$ is non-zero, or both $x$ and $y$ are non-zero. If the former is true, the first two columns of $\bd Z^*$ will always be orthogonal and linearly independent, ensuring the rank of $\bd Z^*$ is at least 2. 

        If the latter is true, we apply a result from graph theory to prove the rank of $\bd Z^*$ is still at least 2. Given a graph $\bd G$, we say that some matrix $\bd B$ fits $\bd G$ if for all $i \neq k$, $B_{ik}=0 \iff (i,k)$ is not an edge in $\bd G$ \cite{VANDERHOLST20031}. (Diagonal values are not constrained, hence why the value of $*$ does not impact results.) Theorem 3.4 of \cite{VANDERHOLST20031} states:

        \emph{Let $\bd G$ be a connected tree with $n$-buses. If $\bd B \succeq 0$ and $\bd B$ fits $\bd G$, then $\mathrm{rank}(\bd B) \geq n-1.$}

        We apply this to our case where we can say $\bd Z^* \succeq 0$ ``fits'' a connected 3-bus tree $\bd G$ and thus has rank $\geq 3-1=2$.
        
        \item[Case 3: $f_1' \neq 0,f_2'\neq0$.] This occurs when the constraints on $\bd W$ are binding with respect to both $f_1$ and $f_2$.
        
        In this case, it suffices to show $(f_1' \bd C + f_2' \bd D)^\top \Edq \neq \bd 0.$ Otherwise, we could have $\alpha f_1' + \beta f_2' + \lambda + \rho = 0 \implies \mathrm{rank}(\bd Z^*)=1.$ Note that a linear combination of any two matrices $\bd C, \bd D \in \left\{\bd I_2, \bd J^\top, \bd Z_{eq} \right\}$ can take on three possible forms:
        $$\left\{ \bma i & j \\ -j & i \ebma, \bma i - z_1 & z_2 \\ -z_2 & i-z_1 \ebma, \bma -z_1 & j + z_2 \\ -j-z_2 & -z_1 \ebma \right\},$$ 
        for non-zero $i,j,z_1,z_2$ representing the distinct, non-zero entries of $\bd I_2, \bd J^\top$, and $\bd Z_{eq}$. Simple calculations confirm each $2 \times 2$ matrix determinant is strictly positive, meaning $(f_1' \bd C + f_2' \bd D)$ is full rank $\implies (f_1' \bd C + f_2' \bd D)^\top \Edq \neq \bd 0.$ Therefore, we conclude that $\bd Z^*$ has at least rank 2.
    \end{description}

    Through case-by-case analysis, we have shown $\bd Z^*$ always has at least rank 2 at optimality, and therefore, by complementary slackness, an optimal $\bd W^*$ has at most rank 1.
\end{proof}

\end{document}